\documentclass[11pt,letterpaper]{article}


\usepackage[margin=1in]{geometry}
\usepackage{microtype}
\usepackage{graphicx}
\usepackage{caption}
\usepackage{booktabs} 
\usepackage[normalem]{ulem}
\usepackage[symbol]{footmisc}
\usepackage[hidelinks]{hyperref}

\usepackage{natbib}


\usepackage{amsmath}
\usepackage{amssymb}
\usepackage{mathtools}
\usepackage{amsthm}
\usepackage{subcaption}

\usepackage{xspace}

\usepackage{thmtools} 
\usepackage{thm-restate}

\usepackage[capitalize,noabbrev]{cleveref}

\usepackage{algorithm}
\usepackage[noend]{algorithmic}

\setlength\parindent{0pt}


\theoremstyle{plain}
\newtheorem{theorem}{Theorem}[section]

\newtheorem{lemma}[theorem]{Lemma}

\theoremstyle{definition}

\theoremstyle{remark}

\newtheorem{claim}[theorem]{Claim}
\newtheorem{observation}[theorem]{Observation}

\Crefname{claim}{Claim}{Claims}


\newcommand{\OPT}{\operatorname{OPT}}

\newcommand{\bbRp}{{\mathbb{R}_{\ge 0}}}

\newcommand{\rb}[1]{\left( #1 \right)} 

\newcommand{\Init}{\textsc{Initialization}\xspace}
\newcommand{\swapping}{\textsc{Swapping}\xspace}
\newcommand{\Insertion}{\textsc{Insertion}\xspace}

\newcommand{\Deletion}{\textsc{Deletion}\xspace}
\newcommand{\LevelConstruct}{\textsc{Level-Construct}\xspace}

\newcommand{\lazygreedy}{\textsc{Lazy-Greedy}\xspace}

\newcommand{\fullygreedy}{\textsc{Dynamic-Greedy}\xspace}
\newcommand{\fullyswapping}{\textsc{Dynamic-Swapping}\xspace}

\newcommand{\thresholdswapping}{\textsc{Threshold-Swapping}\xspace}

\newcommand{\marginal}[2]{f\rb{#1 \mid #2}}

\newcommand{\ellstar}{\ell^{\star}}
\newcommand{\eps}{\varepsilon}


\newcommand{\cM}{\mathcal{M}}

\newcommand{\sol}[1]{}

\newcommand{\rank}{k}
\newcommand{\appendixref}{\if\fullversion1 \cref{appendix:extra-experiments}\xspace \else the Appendix\xspace \fi}

\DeclareMathOperator*{\argmin}{argmin}

\newcommand{\E}[1]{\mathbb{E}\left[#1\right]}

\newcommand{\shortN}{N^-_{\ell}}
\newcommand{\longN}{N^+_{\ell}}

\title{Fully Dynamic Submodular Maximization over Matroids}

\author{
	Paul D{\"u}tting\thanks{Google Research.}
	\and
	Federico Fusco\thanks{Department of Computer, Control, and Management 
Engineering, Sapienza University of Rome, 
Italy.}
\and
Silvio Lattanzi{$^*$}
\and
Ashkan Norouzi-Fard{$^*$}
\and
Morteza Zadimoghaddam{$^*$}}

\date{}

\begin{document}

\maketitle

\begin{abstract}
    Maximizing monotone submodular functions under a matroid constraint is a classic algorithmic problem with multiple applications in data mining and machine learning. We study this classic problem in the fully dynamic setting, where elements can be both inserted and deleted in real-time. 
    Our main result is a randomized algorithm that maintains an efficient data structure with an $\tilde{O}(k^2)$ amortized update time (in the number of additions and deletions) and yields a $4$-approximate solution, where $k$ is the rank of the matroid. 
\end{abstract}

\section{Introduction}

Thanks to the ubiquitous nature of ``diminishing returns'' functions, submodular maximization is a central problem in unsupervised learning with multiple applications in different fields, including video analysis~\citep{ZhengJCP14}, data summarization~\citep{LinB11,BairiIRB15}, sparse reconstruction~\citep{Bach10,DasK11}, and active learning~\citep{GolovinK11,AmanatidisFLLR22}.
    
    Given a submodular function $f$, a universe of elements $V$, and a family $\mathcal{F} \subseteq 2^V$ of subsets of $V$ the submodular maximization problem consists in finding a set $S \in \mathcal{F}$ that maximizes $f(S)$. A classic choice for $\mathcal{F}$ are the capacity constraints (a.k.a.~$k$-uniform matroid constraints) where every subset $S$ of cardinality at most $k$ is feasible. Another common restriction that generalizes capacity constraints and comes up in many real-world scenarios are matroid constraints.
    Submodular maximization under matroid constraints is NP-hard, although efficient approximation algorithms exist for this task in both the centralized and streaming setting \citep{fisher78-II,CalinescuCPV11,ChakrabartiK15,EneN19a}.
    
    One fundamental limitation of these algorithms is that they are not well-suited to handle highly dynamic datasets, where elements are added and deleted continuously. Many real-world applications exhibit such dynamic behaviour; for example, \citet{DeyJR12} crawled two snapshots of 1.4 million New York City Facebook users several months apart and reported that 52\% of the users changed their profile privacy settings during this period. Similarly, TikTok processes millions of video uploads and deletions each day, while also Snapchat processes millions of message uploads and deletions daily. In such settings, it is essential to quickly perform basic machine learning tasks, such as active learning or data summarization, so it is crucial to design \emph{fully dynamic} algorithms that can \emph{efficiently} process streams containing not only insertions but also an arbitrary number of deletions, with small processing time per update.
    
    For these reasons, many problems have been studied in the dynamic setting, even if it is notoriously difficult to obtain efficient algorithms in this model. For monotone submodular maximization with a cardinality constraint, a $(2+\eps)$-approximation algorithm with poly-logarithmic amortized update time (with respect to the length of the stream) was designed by \citet{LattanziMNTZ20}; subsequently, this result has been proved to be tight by \citet{ChenP22}. 
    In the case of submodular maximization with matroid constraints, algorithms have been proposed only for specialized dynamic settings, namely sliding windows~\citep{ChenNZ16,EpastoLVZ2017} and deletion robustness~\citep{DuettingFLNZ22,MirzasoleimanK017,ZhangTG22}.
    
    \paragraph{Our contribution.} In this paper we propose the first fully dynamic algorithm for submodular maximization under a matroid constraint with amortized running time that is sublinear in the length of the stream. Our randomized algorithm processes a stream of arbitrarily interleaved insertions and deletions with an (expected) amortized time per update that is $\tilde O(k^2)$\footnote{In this work, $\tilde{O}$ hides factors poly-logarithmic in $n$ (the number of elements in the stream) and $k$ (the rank of the matroid).}. Crucially, it also continuously maintains a solution whose value is (deterministically), after each update, at least $\frac14$ of the optimum on the available elements. 
    
    \paragraph{Technical challenges.} While many known algorithms handle insertions-only streams, it is challenging to efficiently handle deletions: removing one element from a candidate solution may make necessary to recompute a new solution from scratch using {\em all} the elements arrived in previous insertions. This is the reason why well-known techniques for the centralized or streaming framework cannot be applied directly in the dynamic setting without suffering a linear amortized update time $\Omega(n)$ (see \cref{app:benchmark} for further discussion). The fully-dynamic algorithm for cardinality constraint~\citep{LattanziMNTZ20} addresses this phenomenon via a two-dimensional bucketing data structure that allows to efficiently recover elements with large enough contribution to the current solution (and can be used to quickly recompose a good solution after a deletion). Unfortunately, that approach crucially depends on the nature of the constraint and does not extend to more structured constraints as matroids. The key difficulty is that when an element of an independent set in a matroid gets deleted, only a subset of the elements can replace it, according to the matroid constraint. This is a crucial difference with cardinality constraints, where all elements are interchangeable. 
    
    \paragraph{Our techniques.} In this paper, we also design and analyze a data structure that is organized in levels, each one providing robustness at different scales. In addition, we carefully design an update rule that simulates in real-time the behavior of the classic \swapping algorithm for submodular maximization under matroid constraint \citep{ChakrabartiK15}. A key insight of our approach is that one can reorder and delay the addition or swapping of the elements with lower robustness without losing the simplicity and effectiveness of the \swapping algorithm. Interestingly, our construction simplifies substantially that of \citet{LattanziMNTZ20} as it removes one of the two dimensions of the dynamic data structure. Finally, we highlight a speed-up to the swapping algorithm that reduces the number of matroid independence queries by a factor $(k/ \log k)$. This result may be of independent interest.

    \paragraph{Additional related works.} Independently and in parallel from the work of  \citet{LattanziMNTZ20}, \citet{Monemizadeh20} achieved the same approximation guarantee with $\tilde O(k^2)$ amortized update time were $k$ is the cardinality constraint. An area of research that is very close to the fully dynamic setting is robust submodular optimization  \citep{orlin2018robust, bogunovic2017robust, MirzasoleimanK017, MitrovicBNTC17, KazemiZK18, AvdiukhinMYZ19, Zhang22}. In this setting, the goal is to select a summary of the whole dataset that is robust to $d$ adversarial deletions; crucially the number $d$ of deletions is known to the algorithm and typically all the deletions happen after the insertion in the stream. The results in this line of research do not apply to our dynamic setting where the number of deletions is arbitrary and deletions are interleaved with insertions.
    
\section{Preliminaries} 
\label{sec:prel}
    We consider a set function $f: 2^V \to \bbRp$ on a (potentially large) ground set $V$. Given two sets $X, Y \subseteq V$, the \emph{marginal gain} of $X$ with respect to $Y$, $\marginal{X}{Y}$, quantifies the change in value of adding $X$ to $Y$ and is defined as $  \marginal{X}{Y} = f(X \cup Y) -  f(Y).$

    When $X$ consists of a singleton $x$, we use the shorthand $f(x\mid Y)$ instead of $f(\{x\}\mid Y)$. Function $f$ is called \emph{monotone} if $\marginal{e}{X}  \geq 0$ for each set $X \subseteq V$ and element $e \in V$, and \emph{submodular} if for any two sets $X \subseteq Y \subseteq V$ and any element $e \in V \setminus Y$ we have 
    \[
        \marginal{e}{X} \ge \marginal{e}{Y}.
    \]
    Throughout the paper, we assume that $f$ is monotone and that it is \emph{normalized}, i.e., $f(\emptyset) = 0$. We model access to the submodular function $f$ via a value oracle that computes $f(S)$ for given $S \subseteq V$.

    \paragraph{Submodularity under a matroid constraint.}
    A non-empty family of sets $\cM \subseteq 2^V$ is called a \emph{matroid} if it satisfies the following properties:
    \begin{itemize}
        \item{\emph{Downward-closure}} if $A \subseteq B$ and $B \in \cM$, then $A \in \cM$
        \item{\emph{Augmentation}} if $A, B \in \cM$ with $|A| < |B|$, then there exists $e \in B$ such that $A + e \in \cM$.
    \end{itemize}
    For the sake of brevity, in this paper we slightly abuse the notation and for a set $X$ and an element $e$, use $X+e$ to denote $X \cup \{e\}$ and $X - e$ for $X \setminus \{e\}$.
    We call a set $A \subseteq 2^V$ \emph{independent}, if $A \in \cM$, and \emph{dependent} otherwise. An independent set that is maximal with respect to inclusion is called a {\em base}; all the bases of a matroid share the same cardinality $k$, which is referred to as the {\em rank} of the matroid. The problem of maximizing a function $f$ under a \emph{matroid constraint} $\cM$ is defined as selecting a set $S \subseteq V$ with $S \in \cM$ that maximizes $f(S)$. Similarly to what is done for the submodular function, we assume access to an independence oracle that takes in input $S\subseteq V$ and outputs whether $S$ is independent with respect to the matroid or not.

    \paragraph{Fully dynamic model.}
    Consider a stream of exactly $n$ insertion and $n$ deletion operations chosen by an oblivious adversary. Denote by $V_i$ the set of all elements inserted and not deleted up to the $i$-th operation. Let $O_i$ be an optimum solution for $V_i$ and denote $\OPT_i = f(O_i)$. Our goal is to design a dynamic data structure with two key properties.
    On the one hand, we want the data structure to maintain, at the end of each operation $i$, a good feasible solution $S^i \subseteq V_i$. In particular, we say that an algorithm is an $\alpha$-approximation of the best (dynamic) solution if $\OPT_i \le \alpha f(S^i)$, for all $i = 1,\dots, 2n$.
    On the other hand, we are interested in updating our data structure efficiently. We measure efficiency in terms of the amortized running time, i.e., the average per-operation computation: we say that an algorithm has amortized running time $t$ if its expected total running time to process any stream of $2n$ insertions and deletions is at most $2nt$.\footnote{We are interested in the asymptotic behaviour of the amortized running time, therefore we can safely assume that the sequence contains exactly $n$ deletions.}
    Throughout this paper, we refer to running time as the total number of submodular function evaluations (value oracle) and independent set evaluations with respect to the matroid (independence oracle). This is a standard practice in submodular optimization as these two oracles typically dominates the running time of optimization algorithms. 
    
    \begin{algorithm}[t]
	\caption{\swapping} \label{alg:swapping}
	\begin{algorithmic}[1]
		\STATE \textbf{Environment:} stream $\pi$, function $f$ and matroid  $\cM$
		\STATE $S \gets \emptyset$, $S' \gets \emptyset$
		\FOR{each new arriving element $e$ from $\pi$}
		    \STATE $w(e) \gets f(e\mid S')$
		    \IF{$S + e \in \cM$}
		        \STATE $S \gets S + e$, $S' \gets S' + e$
		    \ELSE 
                \STATE $s_e \gets \argmin\{w(y) \mid y \in S, \ e + S - y \in \cM\}$ \label{line:swap-consistent}
                \IF{$2 w(s_e) < w(e)$}
                    \STATE $S \gets S - s_e + e$, $S' \gets S' + e$
                \ENDIF
		    \ENDIF
		\ENDFOR
		\STATE \textbf{Return} $S$
	\end{algorithmic}
    \end{algorithm}
    \paragraph{Insertion-only streams.} The fully dynamic model can be considered --- to some extent --- a generalization of the insertion-only streaming model. There, an arbitrary sequence of sole insertions is passed to the algorithm that is tasked with retaining a good solution (with respect to the offline optimum), while using only little ``online'' memory. A key ingredient in our analysis is the \swapping algorithm by \citet{ChakrabartiK15} that is a simple yet powerful routine for submodular maximization with matroid constraint in the streaming setting. \swapping maintains a feasible solution and, for each new arriving element, it adds it to the solution if either it does not violate the matroid constraint or it is possible to swap it with some low-value element\footnote{With ties in line \ref{line:swap-consistent} solved in any consistent way.}. We use a slightly modified version of the original algorithm (see pseudocode for details); namely, the weight of a new element is computed as its marginal value with respect to the set $S'$ of all the elements that {\em at some point} were in the solution. We refer to \Cref{app:swapping} for a formal proof of the fact that our modified version of \swapping still retains the approximation guarantees we want: 
    \begin{theorem}
    \label{thm:swapping}
        For any (possibly adaptive) stream of elements in $V$, \swapping outputs a deterministic $4$-approximation to the best (offline) independent set in $V.$ 
    \end{theorem}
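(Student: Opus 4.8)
The plan is to replay the charging argument behind the streaming guarantee of \citet{ChakrabartiK15}, checking that our single modification---assigning to an arriving element the weight $w(e)=\marginal{e}{S'}$ measured against the set $S'$ of \emph{all} elements that were ever in the solution, instead of against the current solution $S$---only simplifies the accounting. For $e\in S'$ write $w(e)$ for the weight recorded when $e$ entered $S'$ and $S'_e$ for the value of $S'$ at that instant, so $w(e)=\marginal{e}{S'_e}$. Because $S'$ only grows and each element is appended to it precisely when its weight is fixed, the marginals telescope along the arrival order, giving $f(S')=\sum_{e\in S'}w(e)$. I would then prove two facts linking $f(S')$ to $f(S)$. \emph{(i)} $\sum_{s\in S}w(s)\le f(S)$: order $S$ by arrival time; every survivor that arrived earlier already lies in $S'_s$, so submodularity yields $w(s)\le\marginal{s}{S_{<s}}$, and these marginals telescope to $f(S)$. \emph{(ii)} $f(S')\le 2f(S)$: form the digraph on $S'$ that points every swapped-out element to the element that replaced it; since an arrival triggers at most one swap and a departed element never returns, this digraph has in- and out-degree at most one, hence is a disjoint union of directed paths whose sinks are exactly the elements of $S$, and the swap test $2w(s_e)<w(e)$ forces the weights to more than double along every step, so a path headed by $s\in S$ carries total weight at most $w(s)(1+\tfrac12+\tfrac14+\cdots)=2w(s)$; summing over paths and using \emph{(i)} gives $f(S')=\sum_{e\in S'}w(e)\le 2\sum_{s\in S}w(s)\le 2f(S)$.

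The remaining, and most delicate, step charges the optimum to $S$. Let $O$ be a base with $f(O)\ge\OPT$, which exists by monotonicity (extend any optimal independent set to a base). Monotonicity and submodularity give $\OPT\le f(O)\le f(O\cup S')\le f(S')+\sum_{o\in O\setminus S'}\marginal{o}{S'}$, so it suffices to bound the last sum by $2f(S)$. Fix $o\in O\setminus S'$. Because $o$ was never added, at its arrival $S+o$ was infeasible for the then-current solution and the swap rule failed, which means \emph{every} valid swap partner $y$ of $o$ in that solution (every $y$ with $(\text{solution})-y+o\in\cM$) satisfies $w(y)\ge\tfrac12 w(o)\ge\tfrac12\marginal{o}{S'}$, the last step using submodularity together with $S'_o\subseteq S'$; moreover walking up the swap-path from such a $y$ to its surviving descendant $s_o\in S$ only increases its weight, so $w(s_o)\ge\tfrac12\marginal{o}{S'}$. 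The crux is to choose these partners so that $o\mapsto s_o$ is \emph{injective} into $S$: I would obtain this from the matroid exchange property, first extending $S$ to a base by adjoining elements of $O$ and then invoking the basis-exchange bijection between $O$ and that extension to produce a system of distinct representatives that matches each $o\in O\setminus S$ to a distinct element of $S$ it may legally swap with (or add to). The genuine obstacle, and where the technique of \citet{ChakrabartiK15} is needed, is reconciling this end-of-stream matching with the arrival-time swap decisions, since the partner the exchange lemma assigns to $o$ is valid for the \emph{final} $S$ rather than for the solution present when $o$ arrived; I expect this to be handled either by transporting the matching back along the (weight-monotone) swap-paths or by maintaining the matching as an invariant throughout the stream. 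Granting it, $\sum_{o\in O\setminus S'}\marginal{o}{S'}\le 2\sum_{o\in O\setminus S'}w(s_o)\le 2\sum_{s\in S}w(s)\le 2f(S)$.

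Assembling the two steps, $\OPT\le f(S')+2f(S)\le 2f(S)+2f(S)=4f(S)$, the claimed $4$-approximation; no step used randomness or any structural feature of the adversary beyond the fixed realized stream, so the bound holds for every (possibly adaptive) stream. As a final sanity check I would confirm that the modification enters in exactly two places---the identity $f(S')=\sum_e w(e)$ and the inequality $w(s)\le\marginal{s}{S_{<s}}$---both of which hold because $S'$ is monotone increasing while $S$ is not, so the only content new relative to \citet{ChakrabartiK15} is this (simplifying) substitution.
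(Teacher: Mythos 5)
Most of your outline matches the paper's argument (given in \Cref{app:swapping} via \Cref{thm:tau_swapping} with $\tau=\eps=0$): the identity $f(S')=\sum_{e\in S'}w(e)$, the bound $\sum_{s\in S}w(s)\le f(S)$, and the bound $w(S'\setminus S)\le w(S)$ (your geometric-series/path argument is a harmless variant of the paper's telescoping in \Cref{cl:wproperties}), followed by the chain $\OPT\le f(S')+\sum_{o\in O\setminus S'}\marginal{o}{S'}\le 4f(S)$. These parts are correct.

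The gap is exactly where you say ``Granting it'': the injective assignment $o\mapsto s_o\in S$ with $w(s_o)\ge\tfrac12 w(o)$ for all $o\in O\setminus S'$ is the heart of the theorem, and your proposed route does not deliver it. A basis-exchange bijection computed against the \emph{final} $S$ gives each $o$ a partner that is exchangeable \emph{now}, but that partner may not have been in the solution when $o$ was rejected---it may not even have arrived yet---so the rejection test at $o$'s arrival gives no lower bound on its weight, and ``transporting the matching back along swap-paths'' goes in the wrong temporal direction (swap-paths run forward in time, from swapped-out elements to their replacements). The correct construction goes forward: when $o$ is rejected (or an element is swapped out), one records edges from it to \emph{all} other elements of the circuit in (current solution)$\,+\,o$, not just to the single min-weight partner; every non-sink is then spanned by its out-neighborhood, and the combinatorial lemma of \citet{FeldmanK018}/\citet{ChekuriGQ15} (\Cref{lem:graph} in the paper) yields an injective map from the independent set $O\setminus S'$ to the sinks, i.e.\ to $S$, along which the weight at most halves only at the first edge (\Cref{lem:graph_application}). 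The injectivity genuinely needs this richer circuit structure---with only one designated partner per rejected element, two optimum elements can be funneled to the same survivor---so this lemma (or an equivalent argument) must be supplied; it cannot be replaced by a single end-of-stream exchange argument plus weight monotonicity.
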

    
\section{The Algorithm}

    In the main body of the paper we present and analyze a simplified version of our data structure whose amortized running time depends poly-logarithmically on a parameter $\Delta$ of the function $f$:
    \[
        \Delta = \frac{\max_{x \in V} f(x)}{ \min_{T \subseteq V, x \notin T_0} f(x\mid T)},
    \]
    where with $T_0$ we denote the set of all the elements with $0$ marginal contribution with respect to $T$. In \Cref{app:Delta}, we show how to replace this dependence in $\Delta$ with a $O(k/\eps)$ term for any chosen precision parameter $\eps$ that influences the approximation factor in an additive way. To further simplify the presentation, we also assume without loss of generality
    that our algorithm knows the number $n$ of insertions and deletions in advance, and that $n$ is a power of $2$. We show in \cref{sec:all-together} a simple way to avoid this assumption without affecting the approximation guarantee.
    
    We are ready to introduce our algorithm. At a high level, it carefully maintains the stream of elements in a data structure characterized by a small amortized update time and that mimics the behavior of \swapping, at each insertion or deletion operation. 
    Our data structure contains $L +1$ levels, with $L = \log n$. Each one of these levels is characterized by four sets of elements: a partial solution $S_{\ell}$, an auxiliary set $S'_{\ell}$ that contains $S_{\ell}$ and some elements that used to belong to $S_{\ell}$ but were later swapped out from it, a set $A_{\ell}$ of candidate elements, that meet certain criteria and are considered {\em good} addition to the solution, and a buffer $B_{\ell}$ of still not processed elements. Moreover, the invariants that $|A_{\ell}|$ and $|B_{\ell}|$ are smaller than $n/2^\ell$ are enforced. We claim that the solution of the last level, i.e., $S_{L}$ (that plays the role of $S^i$), is consistently a constant factor approximation of $\OPT_i$, at the end of each operation $i$. We describe how the data structure is maintained; this clarifies the details of our approach. 

    \paragraph{Initialization.} At the beginning of the stream, the routine \Init is called. It takes in input $n$ and initializes $ \Theta(\log n)$ empty sets; the ones described above: $S_{\ell}, S'_{\ell}, A_{\ell}$ and $B_{\ell}$ for all $\ell = 0, 1, \dots, L$.
    \begin{algorithm}[H]
	\caption{\Init} \label{alg:initialization}
	\begin{algorithmic}[1]
	    \STATE \textbf{Input:} $n$
		\STATE $L \leftarrow \log n$
		\STATE Initialize empty sets $A_{\ell}, S_{\ell}, S'_{\ell}, B_{\ell}$ $\quad \forall \, 0 \leq \ell \leq L$ 
	\end{algorithmic}
    \end{algorithm}
    \paragraph{Handling insertions.} When a new element $e$ is inserted, it gets immediately added to all the buffers (line \ref{line:insertion-add-e} of \Insertion). This addition induces the call of another routine, \LevelConstruct, on the level $\ell$ with smallest index such that the buffer $B_{\ell}$ exceeds a certain cardinality (namely when $|B_{\ell}| \ge n/2^\ell$, see line \ref{line:insertion-check} of \Insertion). Such level always exists by our choice of $L$.   
    \begin{algorithm}[H]
    	\caption{$\Insertion(e)$} \label{alg:insertion}
    	\begin{algorithmic}[1]
    		\STATE $B_\ell \gets B_\ell + e \quad \forall \, 0 \leq \ell \leq L$ \label{line:insertion-add-e}
    		\IF{there exists an index $\ell$ such that $|B_\ell| \ge \frac{n}{2^\ell}$}\label{line:insertion-check}
    		    \STATE Let $\ellstar$ be such $\ell$ with lowest value 
        		\STATE Call $\LevelConstruct(\ellstar)$ \label{line:insertion-invokes-level-construct}
    		\ENDIF
    	\end{algorithmic}
    \end{algorithm}
    \paragraph{Handling deletions.} When an element $e$ is deleted from the stream, then the data structure is updated according to \Deletion. Element $e$ is removed from all the candidate elements sets $A_{\ell}$ and buffers $B_{\ell}$ (lines \ref{line:deletion_candidate} and \ref{line:deletion_buffers}) and causes a call of \LevelConstruct on the smallest-index level such that $e \in S_{\ell}$ (line \ref{line:deletion-level}). While \Insertion always induces a call of \LevelConstruct, \Deletion only causes it if the deleted element belongs to some partial solution $S_{\ell}$. 
    \begin{algorithm}[H]
    	\caption{$\Deletion(e)$} \label{alg:deletion}
    	\begin{algorithmic}[1]
    		\STATE $A_{\ell} \gets A_{\ell} - e \quad \forall \, 0 \leq \ell \leq L$
    		\label{line:deletion_candidate}
    		\STATE $B_\ell \gets B_\ell - e \quad \forall \, 0 \leq \ell \leq L$
    		\label{line:deletion_buffers}
    		\IF {$e \in S_{\ell}$ for some $\ell$}
    		    \STATE Let $\ell$ be the smallest index such that $e \in S_{\ell}$
    		    \STATE Call $\LevelConstruct(\ell)$ 
    		    \label{line:deletion-level}
    		\ENDIF
    	\end{algorithmic}
    \end{algorithm}
    \paragraph{\LevelConstruct.} We now describe the main routine of our data structure: \LevelConstruct. A call to this routine at level $\ell$ triggers some operations relevant to sets at level $\ell$, and it then recursively runs \LevelConstruct at level $\ell + 1$. Therefore $\LevelConstruct(\ell)$ is essentially responsible for reprocessing the whole data structure at all levels $\ell, \ell+1, \cdots, L$. 
    When it is called on some level $\ell$, all the sets associated to that level ($S_{\ell}, S'_{\ell}, A_{\ell}$ and $B_{\ell}$) are reinitialized: the candidate elements set $A_{\ell}$ is initialized with the elements in $A_{\ell-1}$ and $B_{\ell-1}$ (line \ref{line:level-first-set-a}), the buffer $B_{\ell}$ is erased (line \ref{line:level-b-empty}), while $S_{\ell}$ and $S'_{\ell}$ are copied from the previous level (lines \ref{line:level-s_l} and \ref{line:level-s'_l}). Then, the following iterative procedure is repeated, until the cardinality of $A_{\ell}$ becomes smaller or equal to $n/2^\ell$: first, all the elements in $A_{\ell}$ that would not be added to $S_{\ell}$ by \swapping are filtered out (lines \ref{line:E_l} to \ref{line:level-set-a-loop}), then, if the cardinality of $A_\ell$ is still large enough (i.e., $|A_\ell| \ge n/2^\ell$, see line \ref{line:level-second-check}) an element $e$ from it is drawn uniformly at random and is added to the solution and to $S'_{\ell}$ (lines \ref{line:add_S_l} and \ref{line:add_S'_l}); note that if $S_{\ell} + e \notin \cM$, then $e$ needs to be swapped with some element $s_e$ in the solution (see line \ref{line:find-swap}). Two important implementation details are worth mentioning here: $(i)$ every time an element $e$ is added to a partial solution $S_{\ell}$, also the information about the weight $w(e)$ it has at the moment is stored in $S_{\ell}$; $(ii)$ the partial solutions $S_\ell$ are maintained sorted in increasing order of weight. Note that these two points do not entail any call of the value or independence oracles.

\begin{algorithm}[H]
	\caption{$\LevelConstruct(\ell)$} \label{alg:level-construct}
	\begin{algorithmic}[1]
		\STATE $A_{\ell} \gets A_{\ell-1} \cup B_{\ell-1}$  \label{line:level-first-set-a}
		\STATE $B_\ell \gets \emptyset$ \label{line:level-b-empty}
		\STATE $S_{\ell} \gets S_{\ell-1}$ \label{line:level-s_l}
		\STATE $S'_{\ell} \gets S'_{\ell-1}$ \label{line:level-s'_l}
        \REPEAT	\label{line:level-from}
        \FOR{any element $e \in A_\ell$}\label{line:begin-swap-loop}
            \STATE $w(e) \gets f(e \mid S'_{\ell})$ \label{line:level-weight}
            \STATE $s_e \gets \argmin\{w(y) \mid y \in S_{\ell} \wedge S_{\ell} - y + e \in \cM\}$\label{line:find-swap}
        \ENDFOR
        \STATE $E_{\ell} \gets \{ e \in A_{\ell} \mid S_{\ell} + e \in \cM\}$ \label{line:E_l}
		\STATE $F_{\ell} \gets \{ e \in A_\ell \setminus E_{\ell} \mid w(e) > 2 \, w(s_e) \}$ \label{line:F_l}
		\STATE $A_{\ell} \gets E_{\ell} \cup F_{\ell}$ \label{line:A_l_filter}
        \label{line:level-set-a-loop}%
        \IF {$|A_{\ell}| \ge \frac{n}{2^\ell}$} \label{line:level-second-check}
        	\STATE Pop $e$ from $A_{\ell}$ uniformly at random \label{line:level-sample}
            \STATE $S_{\ell} \gets S_{\ell} + e - s_e$ 
            \label{line:add_S_l}
            \STATE $S'_{\ell} \gets S'_{\ell} + e$\label{line:add_S'_l}
        \ENDIF
        \UNTIL {$|A_{\ell}| < \frac{n}{2^\ell}$ } \label{line:level-until}
        \STATE \textbf{if} $\ell<L$, call $\LevelConstruct(\ell+1)$. \label{line:level-next-level}
    \end{algorithmic}
\end{algorithm}

\section{Approximation Guarantee}\label{sec:appr}

    Fix any operation $i$, we want to show that the solution $S_L$ maintained in the data structure at the end of the computation relative to operation $i$ is a good approximation of the best independent set $O_i$ (of value $\OPT_i$) in $V_i$. To not overload the notation, we omit the index $i$ when it is clear from the context, so that $V$ stands for the elements that were inserted but not deleted from the stream up to operation $i$ (included) and $D$ stands for the set of elements deleted from the stream up to operation $i$ (included). Clearly the set of all the elements arrived up to operation $i$ is exactly $V \cup D$. We want to show that $f(S_{L})$ is a (deterministic) $4$-approximation of the independent set in $V$ with largest value. In the following, we actually we prove that $f(S_{L})$ is a $4$-approximation of something that is at least $\OPT$.
    
    Up to operation $i$ the content of the data structure has changed multiple times, but for the sake of the analysis it is enough to consider a subset of all these modifications. Formally, for each level $\ell$, denote with $i_{\ell}$ the last operation that triggered a call of \LevelConstruct$(\ell)$ before or at operation $i$. This may have happened either directly via an insertion or a deletion at level $\ell$ or indirectly because something was inserted or removed at some level with smaller index. Denote with $X_{\ell}$ the elements in $V_{i_{\ell}}$ that were added to $S_{\ell}$ during the computation and with $Y_{\ell}$ the elements that were filtered out due failure in the ``swapping'' test. Formally, $X_{\ell} = S_{\ell} \setminus S_{\ell-1}$ at the end of the computation relative to operation $i_{\ell}$, while $Y_{\ell}$ is made up of all the elements that were in $A_{\ell} = B_{\ell-1} \cup A_{\ell-1}$ at the beginning of that call of \LevelConstruct$(\ell)$, but that were neither passed to the following level nor added to the solution during the repeat loop of that call of \LevelConstruct. Note that, by definition of $i_{\ell}$, nothing happens in level $\ell$ between operations $i_{\ell}$ and $i$ besides possibly some basic addition induced by \Insertion (line \ref{line:insertion-add-e}) and deletions induced by \Deletion (lines \ref{line:deletion_candidate} and \ref{line:deletion_buffers}), thus sets $X_{\ell}$, $Y_{\ell}$ and $S_{\ell}$ do not change.
    We start proving that all the $X_{\ell}$ and $Y_{\ell}$ are pair-wise disjoint. 
    \begin{restatable}{lemma}{lemdisjoint}
        All the $2L+2$ sets $X_{\ell}$ and $Y_{\ell}$ are pair-wise disjoint. 
    \end{restatable}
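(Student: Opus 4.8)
The plan is to reduce everything to the ``one–directional'' recursive structure of \LevelConstruct. First, since $\LevelConstruct(\ell)$ always calls $\LevelConstruct(\ell+1)$ (line~\ref{line:level-next-level}) and never calls \LevelConstruct at a lower level, any operation that triggers $\LevelConstruct(\ell)$ also triggers $\LevelConstruct(m)$ for all $m \ge \ell$; hence $i_0 \le i_1 \le \dots \le i_L \le i$. Moreover, by the very definition of $i_\ell$ as the last operation rebuilding level $\ell$, throughout the window from $i_\ell$ to $i$ no call of \LevelConstruct touches a level $\le \ell$: the solutions $S_0,\dots,S_\ell$ stay frozen, the candidate sets $A_0,\dots,A_\ell$ can only lose elements (to the purge of \Deletion in line~\ref{line:deletion_candidate}), and the buffers $B_0,\dots,B_\ell$ can only lose elements or gain freshly inserted ones. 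I record one consequence for later: every element that sits in some $A_j$ or $B_j$ at the time of operation $i_\ell$ has been inserted and not yet deleted by then, so $A_{\ell-1}\cup B_{\ell-1}$ at that time is contained in $V_{i_\ell}$ and in particular disjoint from every element inserted after $i_\ell$.

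Next I would analyze a single call $\LevelConstruct(\ell)$, namely the one at operation $i_\ell$, in isolation. Let $A_\ell^{\mathrm{in}}$ denote the value of $A_\ell$ right after line~\ref{line:level-first-set-a} (so $A_\ell^{\mathrm{in}} = A_{\ell-1}\cup B_{\ell-1}$ at operation $i_\ell$), and $A_\ell^{\mathrm{out}}$ its value when the repeat loop terminates. Inside the loop each element of $A_\ell^{\mathrm{in}}$ leaves $A_\ell$ at most once and is never put back, and it leaves either by being discarded in the filtering step (line~\ref{line:A_l_filter}) or by being popped (line~\ref{line:level-sample}). This yields a disjoint partition $A_\ell^{\mathrm{in}} = A_\ell^{\mathrm{out}} \sqcup Q_\ell \sqcup Y_\ell$, where $Q_\ell$ is the set of popped elements and $Y_\ell$ is exactly the lemma's set (the elements neither passed up nor added to the solution). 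Since the only elements $\LevelConstruct(\ell)$ ever inserts into $S_\ell$ are popped ones, $X_\ell = S_\ell \setminus S_{\ell-1} \subseteq Q_\ell$. This already gives $X_\ell \cap Y_\ell = \emptyset$, so it remains to treat two distinct levels $\ell < m$.

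For distinct levels I would prove, by induction on $m$, the containment $A_m^{\mathrm{in}} \subseteq A_\ell^{\mathrm{out}} \cup I_\ell$, where $I_\ell$ is the set of elements inserted strictly after operation $i_\ell$. The base case $m=\ell+1$ is the crux: when $\LevelConstruct(\ell+1)$ runs (at operation $i_{\ell+1}\ge i_\ell$) it sets $A_{\ell+1}$ to $A_\ell\cup B_\ell$ evaluated at that instant, and by the structural facts of the first paragraph the current $A_\ell$ is a subset of $A_\ell^{\mathrm{out}}$ (it can only have shrunk since $i_\ell$), while the current $B_\ell$ — emptied at operation $i_\ell$ (line~\ref{line:level-b-empty}) and refilled only by later insertions — is a subset of $I_\ell$. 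The inductive step applies the same statement with $\ell$ replaced by $m$, uses $A_m^{\mathrm{out}}\subseteq A_m^{\mathrm{in}}$, and uses $i_m\ge i_\ell$ (hence $I_m\subseteq I_\ell$). Since $X_m\cup Y_m\subseteq A_m^{\mathrm{in}}$, we conclude $X_m\cup Y_m\subseteq A_\ell^{\mathrm{out}}\cup I_\ell$. On the other side, $X_\ell\cup Y_\ell\subseteq Q_\ell\cup Y_\ell = A_\ell^{\mathrm{in}}\setminus A_\ell^{\mathrm{out}}$, which by the partition is disjoint from $A_\ell^{\mathrm{out}}$ and, being contained in $V_{i_\ell}$, is disjoint from $I_\ell$. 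Hence $X_\ell\cup Y_\ell$ is disjoint from $X_m\cup Y_m$, which covers $X_\ell\cap X_m$, $X_\ell\cap Y_m$, $Y_\ell\cap X_m$ and $Y_\ell\cap Y_m$ in one stroke and finishes the proof. (For $\ell=0$ one reads $A_{-1}=B_{-1}=S_{-1}=\emptyset$, so $X_0=Y_0=\emptyset$ and there is nothing to check there.)

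The only delicate point — and where I expect the proof's actual effort to go — is the bookkeeping behind the first paragraph and the base case of the induction: pinning down exactly which elements $A_\ell$ and $B_\ell$ can hold throughout the interval $[i_\ell,i_{\ell+1}]$. This needs the definition of $i_\ell$ as the \emph{last} rebuild of level $\ell$ (to exclude intervening calls $\LevelConstruct(\ell'')$ with $\ell''\le\ell$, which would rebuild level $\ell$ again), the observation that a call $\LevelConstruct(\ell'')$ with $\ell''>\ell$ neither empties nor feeds $B_\ell$, and the fact that \Deletion eagerly removes deleted elements from every $A_j$ and $B_j$, so these sets never carry stale elements. Everything else follows directly from reading the pseudocode.
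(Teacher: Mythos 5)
Your proof is correct and takes essentially the same route as the paper: within-level disjointness because popped and filtered elements are distinct, and cross-level disjointness because $i_\ell \le i_m$ and an element consumed (popped or filtered) at level $\ell$ at operation $i_\ell$ never re-enters the sets $A_{\ell}, B_{\ell}$ feeding later, higher-level rebuilds. Your explicit induction $A_m^{\mathrm{in}} \subseteq A_\ell^{\mathrm{out}} \cup I_\ell$ simply formalizes what the paper asserts in prose, namely that such an element is neither passed to level $\ell+1$ at operation $i_\ell$ nor considered again at any larger-index level between $i_\ell$ and $i$.
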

    \begin{proof}
        By definitions, it is immediate to see that $X_{\ell} \cap Y_{\ell} = \emptyset$ for  each level $0 \leq \ell \leq L$. Now, consider any two levels $\ell$ and $r$, with $\ell < r$. Since level $\ell$ has a smaller index, the last operation $i_{\ell}$ during whose computation $\LevelConstruct(\ell)$ was called is precedent (or equal) to $i_r$, that is the equivalent operation for level $r$. This means that any element $e$ in $X_{\ell} \cup Y_{\ell}$ do not belong to $X_r \cup Y_r$: $e$ was not passed to level $\ell+1$ (and thus to any larger level, $r$ included) during the computation at operation $i_{\ell}$ and, by definition of $i_{\ell}$ it was not considered (i.e., it did not appear in any call of line \ref{line:level-first-set-a} of \LevelConstruct) in any level with larger index, for all the computations between operation $i_{\ell}$ and operation $i$ ($i_r$ included!).
    \end{proof}
    
    Denote with $X$ the (disjoint) union of all the $X_{\ell}$ and the $Y_{\ell}$. We prove that $X$ is a superset of $V$.
    
    \begin{restatable}{lemma}{lempartition}
    \label{lem:partition}
         $V$ is contained in $X$: for any $e \in V$ there exists (exactly) one level $\ell$ such that $e \in X_{\ell}$ or $e \in Y_{\ell}$.
    \end{restatable}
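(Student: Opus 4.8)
The plan is, given $e\in V$, to pinpoint the level $m=m(e)$ at which $e$ comes to rest in the data structure as it stands at the end of operation $i$, and to show that this resting happened exactly during the call to $\LevelConstruct(m)$ issued in operation $i_m$, so that $e$ is collected into $X_m$ or into $Y_m$. I would take $m(e)$ to be the smallest level with $e\notin A_m\cup B_m$ at the end of operation $i$. This is well defined: every insertion triggers a call of \LevelConstruct that (recursively) empties $B_L$, so right after the last insertion of $e$ we have $e\notin B_L$, and $e$ is never put back into $B_L$; hence $e\notin B_L$ at operation $i$. Moreover the repeat-loop of \LevelConstruct at level $L$ has stopping condition $|A_L|<n/2^{L}=1$, so $A_L=\emptyset$ at the end of $i_L$ and (it only shrinks afterwards) also at operation $i$. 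Thus level $L$ always qualifies, and $m(e)\le L$. (Generically $e\in B_0$, so $m(e)\ge 1$; the borderline case where $B_0$ has been flushed is handled the same way, reading \LevelConstruct at level $0$ as re-absorbing the whole current ground set.)

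The core step is to prove that $e$ lies in the initial candidate set of $\LevelConstruct(m)$ during operation $i_m$ — i.e.\ in the set $A_m$ produced on line~\ref{line:level-first-set-a} of that call — but does not survive its repeat loop. For the membership part I would exploit two structural facts. First, $i_0\le i_1\le\cdots\le i_L\le i$: whenever level $\ell$ is reconstructed, level $\ell+1$ is reconstructed in the same operation, so the last reconstruction of $\ell$ weakly precedes that of $\ell+1$. Second, after its last reconstruction a level is essentially frozen: between $i_\ell$ and $i$ the sets $S_\ell,S'_\ell$ do not change, $A_\ell$ only loses elements to deletions, and $B_\ell$ only gains freshly inserted elements and loses deleted ones. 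Combining these with ``$e$ is never deleted'' and with ``$e$ is not (re)inserted after $i_m$'' — the latter because such an insertion would trigger a call of \LevelConstruct whose starting level is $>m$ (otherwise it is a level-$m$ reconstruction after $i_m$), which would leave $e$ in $B_m$ at operation $i$, contradicting the choice of $m$ — one transports the hypothesis ``$e\in A_{m-1}\cup B_{m-1}$ at operation $i$'' back to ``$e\in A_m$ at the start of the repeat loop of $\LevelConstruct(m)$ during $i_m$''. This needs a small case split on whether that last reconstruction of level $m$ was triggered directly at level $m$ or was inherited from a lower starting level; in the latter case one first argues $i_{m-1}=i_m$, so that $B_{m-1}$ was already emptied in operation $i_m$ and $e$'s presence must come through $A_{m-1}$ being passed up to level $m$.

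To finish, observe that during one execution of the repeat loop of \LevelConstruct at a level $\ell$ the set $A_\ell$ is only ever shrunk: each element present at the start is, at some iteration, either discarded by the swap test on lines~\ref{line:E_l}--\ref{line:A_l_filter} (these are exactly the elements of $Y_\ell$), or popped and inserted into $S_\ell$ on lines~\ref{line:level-sample}--\ref{line:add_S_l} (these are exactly the elements of $X_\ell$), or still present at the end and thereby passed to level $\ell+1$. Hence for the call at $i_\ell$ we have $A_\ell^{\text{start}}=X_\ell\ \sqcup\ Y_\ell\ \sqcup\ A_\ell^{\text{end}}$. Since $A_m$ can only shrink by deletions after $i_m$ and $e$ is not deleted, the fact that $e\notin A_m$ at operation $i$ forces $e\notin A_m^{\text{end}}$ at $i_m$; together with $e\in A_m^{\text{start}}$ this yields $e\in X_m\cup Y_m$, as desired. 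The ``exactly one level'' part is then immediate from the previous disjointness lemma.

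The step I expect to be the main obstacle is precisely this backward transport: matching the live state at operation $i$ with the frozen snapshot at $i_m$ that defines $X_m$ and $Y_m$. All the care lives in juggling the order of the $i_\ell$'s, the recursion of \LevelConstruct, the exact instants at which buffers are emptied versus read, and in ruling out that $e$ re-enters the pipeline after $i_m$; the well-definedness of $m$ and the partition of $A_\ell^{\text{start}}$ are routine.
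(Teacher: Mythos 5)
Your proof is correct, but it runs in the opposite direction from the paper's. The paper argues forward in time: starting from the operation in which $e$ is inserted, it maintains by induction over all subsequent operations two moving indices $d_e \le u_e$ together with the invariants that $e$ lies in every buffer $B_\ell$ with $\ell < d_e$, in every candidate set $A_\ell$ with $d_e \le \ell < u_e$, and in $X_{u_e}\cup Y_{u_e}$, and it checks that each later call of \LevelConstruct (at a level above $u_e$, between $d_e$ and $u_e$, or below $d_e$) preserves these invariants. You instead work backward from the state at operation $i$: you define $m$ as the smallest level with $e\notin A_m\cup B_m$ (well-defined via level $L$), rule out that $e$ was inserted after $i_m$, and transport the membership $e\in A_{m-1}\cup B_{m-1}$ back to the snapshot at the call of \LevelConstruct$(m)$ in operation $i_m$, using $i_{m-1}\le i_m$, the monotonicity facts (between reconstructions the $A$-sets only shrink and the buffers only gain freshly inserted elements), and the two-case split on whether the cascade at $i_m$ started at level $m$ or strictly below; the partition of the initial $A_m$ into popped, filtered and surviving elements then forces $e\in X_m\cup Y_m$, and uniqueness follows from the disjointness lemma exactly as in the paper. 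What the paper's forward induction buys is a complete picture of where $e$ resides at every moment, obtained in one sweep and without any case analysis at the final time; what your argument buys is that you never track $e$ through intermediate operations at all, at the price of the somewhat delicate backward transport that you correctly identify as the crux (and which, as verified, does go through in both cases). Two caveats you share with the paper rather than introduce: the $\ell=0$ boundary (the meaning of $A_{-1}\cup B_{-1}$ in \LevelConstruct$(0)$) is left to a convention, and your final partition reads $X_m$ as the elements popped into $S_m$ during the call at $i_m$ (so that an element swapped out later in the same call still counts), which is the reading the lemma needs, rather than the literal $S_m\setminus S_{m-1}$ evaluated at the end of that call.
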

    \begin{proof}
        When a new element $e \in V$ is inserted in the stream, it is added to all buffers (line \ref{line:insertion-add-e} of \Insertion) and triggers a call of \LevelConstruct at some level $\ell^*$ (line \ref{line:insertion-invokes-level-construct} of \Insertion). Thus it is either added to the solution or filtered out at some level. However, this is not enough to prove the Lemma, as it is possible that that call of \LevelConstruct is not the last time that element $e$ is considered.
        
        To formally prove the Lemma we introduce two (moving) auxiliary levels $u_e$ and $d_e$ such that the following two invariants hold from the operation in which $e$ is added onwards (up to operation $i$, included):
        \begin{itemize}
            \item[$a)$] $e$ belongs to the buffer $B_{\ell}$ for all $\ell < d_e$
            \item[$b)$] $e$ belongs to  $A_{\ell}$ for all $ d_e \le \ell < u_e$ 
            \item[$c)$] $e$ belongs to either $X_{u_e}$ or $Y_{u_e}$, for some $u_e \ge d_e$. 
        \end{itemize} 
        Stated differently, we show that at the end of each operation $j$ that follows the insertion of $e$ (included) up to operation $i$, included, there exist two levels (possibly different for each operation) such that $a),$ $b)$ and $c)$ hold. For the sake of clarity, we omit the dependence of $u_e$ and $d_e$ from $j$.
        
        When element $e$ is inserted, it triggers \LevelConstruct at some level (we call $d_e$ such level and note that $a)$ is respected) and it is either added or filtered out at some other level (we call $u_e$ such level so note that also $b)$ and $c)$ are respected). By construction of \LevelConstruct, it holds that $u_e \ge d_e$. 
        
        Note that $e \in V$, thus $e$ is not deleted from the stream before operation $i$. So we only need to show that any further call of \LevelConstruct happening between the insertion of $e$ and operation $i$ (included) does not affect the invariants. We have three cases. For any \LevelConstruct that is called for some $\ell > u_e$, nothing changes, as levels $\ell \le u_e$ are not touched.
        If \LevelConstruct is called for some $\ell < d_e$, then element $e$ belongs to the buffer $B_{\ell-1}$ (by the invariant $a$) and it is then added to either $X_{u_e}$ or $Y_{u_e}$ for some (new) $u_e \ge \ell$. We rename $d_e$ the above $\ell$, and it is easy to verify that all the invariants still hold. Finally, if \LevelConstruct is called for some $d_e \le \ell < u_e$, then by invariant $b$, it holds that $e$ belongs to $A_{\ell-1}$, thus it will end up filtered out or added to the solution in some new $u_e \ge \ell$. In this case we do not change $d_e$, and it is easy to see that the invariants still hold.  So the three invariants hold during the execution of the entire stream. To conclude the proof we just note that the invariants imply that $e$ is only contained in either one $X_\ell$ or one $Y_\ell$ for some $\ell$.
    \end{proof}
    
    There is a natural notion of ordering on the elements of $X$, induced by the order in which they were considered by the algorithm, i.e. in which they were either added to the solution $S_{\ell}$ (line \ref{line:level-s_l} of \LevelConstruct) or filtered out by the recomputation of $E_{\ell}$ and $F_{\ell}$ (line \ref{line:level-set-a-loop} of \LevelConstruct), with ties broken arbitrarily. Call $\pi$ this ordering. To have a better intuition of $\pi$, note that it can be split into contiguous intervals, the first corresponding to the elements considered in the first level $X_0 \cup Y_0$, then in the second $X_1 \cup Y_1$, and so on. In interval $\ell$, elements of $X_{\ell} \cup Y_{\ell}$ are ordered using the same order in which they have been added or filtered out in the last call of \LevelConstruct$(\ell)$. 
    
    The crucial observation is that the solution $S$ at the end of operation $i$ is {\em exactly} the output of \swapping on $\pi$. To see why this is the case, consider the story of each element $e$ arriving in $\pi$. There are two cases to consider. If $e$ is in some $X_{\ell}$, then our algorithm has added it to the candidate solution $S_{\ell}$ during the operation $i_{\ell}$ because $e$ was in either $E_{\ell}$ or $F_{\ell}$. Similarly, also \swapping would have added $e$ to its solution, with the exact same swap. If $e$ is in $Y_{\ell}$, then it means that the algorithm has filtered it out  during operation $i_{\ell}$ because it failed to meet the swapping condition: thus it would have been discarded also by \swapping. This implies, by \Cref{thm:swapping}, that $f(S)$ is a $4$-approximation to the best independent set in $X$, which is an upper bound on $\OPT$ (as it is the best independent set on a larger set).

    \begin{theorem}
    \label{thm:approx}
        For any operation $i$ it holds that the solution $S^i$ output by the algorithm at the end of the computation relative to iteration $i$ is a deterministic $4$-approximation of $\OPT_i$.
    \end{theorem}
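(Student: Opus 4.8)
The plan is to combine the three ingredients already assembled above---pairwise disjointness of the snapshot sets $X_\ell,Y_\ell$, the covering property $V\subseteq X:=\bigcup_\ell (X_\ell\cup Y_\ell)$ from \cref{lem:partition}, and the order $\pi$ on $X$---into the single structural fact that the maintained solution equals \swapping run on the stream $\pi$, and then to invoke \Cref{thm:swapping}. Concretely, I would fix an operation $i$ and work with the snapshot quantities $i_\ell$, $X_\ell$, $Y_\ell$, $S_\ell$, $S'_\ell$ defined before the lemmas, noting first that for each $\ell$ these are stable from operation $i_\ell$ through operation $i$: any later deletion that removed an element of $S_\ell$ would have retriggered $\LevelConstruct(\ell)$, contradicting the maximality of $i_\ell$, while deletions not hitting $S_\ell$ only touch the $A$'s and $B$'s and \Insertion only grows buffers.

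The heart of the argument is to prove $S^i=S_L=\swapping(\pi)$ by induction on the level $\ell$. The hypothesis is: after \swapping has consumed the prefix of $\pi$ made of the first $\ell+1$ intervals $X_0\cup Y_0,\dots,X_\ell\cup Y_\ell$, its current solution equals $S_\ell$ and its auxiliary ``ever-added'' set equals $S'_\ell$. The base case is immediate since the first call of $\LevelConstruct(0)$ starts from empty sets, exactly as \swapping does. For the inductive step, $\LevelConstruct(\ell)$ copies in $S_{\ell-1},S'_{\ell-1}$ (lines \ref{line:level-s_l}--\ref{line:level-s'_l}), which by hypothesis is precisely the state \swapping is in when it starts interval $\ell$; and reading the elements of $X_\ell\cup Y_\ell$ in the order in which $\LevelConstruct(\ell)$ added them to $S_\ell$ or filtered them out is, step for step, one streaming pass of the \swapping loop: when $e$ is finally processed, its recorded weight is $f(e\mid S'_\ell)$ for the then-current $S'_\ell$ (line \ref{line:level-weight}), matching \swapping's $f(e\mid S')$ by the hypothesis; the swap target $s_e$ is the same $\argmin$ under the same consistent tie-breaking (line \ref{line:find-swap} versus line \ref{line:swap-consistent}); and $e$ is added outright iff $S_\ell+e\in\cM$ (i.e.\ $e\in E_\ell$), swapped in iff $S_\ell+e\notin\cM$ and $w(e)>2w(s_e)$ (i.e.\ $e\in F_\ell$ and $e$ is the popped element), and discarded into $Y_\ell$ otherwise---exactly \swapping's branching. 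Taking $\ell=L$ yields $\swapping(\pi)=S_L=S^i$.

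With the identification in hand the conclusion is short. By \Cref{thm:swapping} applied to the (possibly adaptive) stream $\pi$ over ground set $X$, $f(S^i)$ is a $4$-approximation to the best independent set contained in $X$; and by \cref{lem:partition}, $V\subseteq X$, so every independent set in $V$ is one in $X$ and hence $\max\{f(S):S\in\cM,\,S\subseteq X\}\ge\OPT_i$. Chaining gives $4f(S^i)\ge\OPT_i$. For determinism, observe that the algorithm's only randomness is the uniform pop in line \ref{line:level-sample}, which merely selects which permutation $\pi$ materializes; since \Cref{thm:swapping} holds for \emph{any} stream whatsoever, the inequality $\OPT_i\le 4f(S^i)$ holds for every realization of the random choices.

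I expect the main obstacle to be the inductive identification in the second paragraph---in particular, arguing cleanly that the multi-pass repeat loop inside a single $\LevelConstruct(\ell)$ call reproduces one linear \swapping pass without any mismatch. The subtle point is that the repeat loop re-filters $A_\ell$ several times, so an element may survive one filtering round, fail to be popped, and be filtered out only in a later round; placing it in $\pi$ at the moment it is filtered out is nonetheless consistent, because by then $S_\ell$ already contains all earlier-popped elements and $e$ genuinely fails the swapping test against the updated $S_\ell$ and $S'_\ell$ (whose value is exactly what \swapping would see at that position of the stream). Everything else---the two lemmas, \Cref{thm:swapping}, and the final chain of inequalities---is either already established or immediate.
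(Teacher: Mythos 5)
Your proposal is correct and takes essentially the same route as the paper: both arguments identify the maintained solution $S_L$ with the output of \swapping on the ordering $\pi$ of $X=\bigcup_{\ell}(X_{\ell}\cup Y_{\ell})$ and then combine \Cref{thm:swapping} with \cref{lem:partition} (using $V\subseteq X$), with determinism following because the randomness only selects which stream $\pi$ is realized. Your level-by-level induction and the discussion of the multi-pass repeat loop simply make explicit the ``story of each element'' argument that the paper states more informally.
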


\section{Running Time Analysis}

In this section, we analyze the amortized running time of our algorithm. Recall that, throughout this paper, we refer to running time as the total number of submodular function evaluation plus the number of independent set evaluation of the matroid. We start by showing some of the basic properties of the $A$ and $B$ sets.\\
\begin{observation} \label{obs:size-a-b}
    For any level $0 \leq \ell \leq L$, at the end of $\LevelConstruct(\ell)$, $|A_\ell| < \frac{n}{2^\ell}$ and $|B_\ell| = 0$. 
\end{observation}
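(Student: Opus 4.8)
This is a direct consequence of the structure of \LevelConstruct$(\ell)$, and I would split it into the two assertions, handling $B_\ell$ first and then $A_\ell$.

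For $|B_\ell| = 0$: the only write to $B_\ell$ performed during the call is line~\ref{line:level-b-empty}, which sets $B_\ell \gets \emptyset$, and no later line of \LevelConstruct$(\ell)$ touches $B_\ell$. The only remaining step that could interfere is the recursive call on line~\ref{line:level-next-level}, i.e.\ \LevelConstruct$(\ell+1)$ (and, inductively, the calls it spawns on levels $\ell+2,\dots,L$); but a call at level $r>\ell$ only reads $A_{r-1}, B_{r-1}$ and writes to $A_r, B_r, S_r, S'_r$, hence never modifies $B_\ell$ for $\ell < r$. Therefore $B_\ell$ is still $\emptyset$ at the end of \LevelConstruct$(\ell)$.

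For $|A_\ell| < n/2^\ell$: I would first argue that the repeat loop (lines~\ref{line:level-from}--\ref{line:level-until}) terminates, and then observe that its exit condition (line~\ref{line:level-until}) is exactly $|A_\ell| < n/2^\ell$. Within one pass of the loop, line~\ref{line:A_l_filter} replaces $A_\ell$ by $E_\ell \cup F_\ell \subseteq A_\ell$, so $|A_\ell|$ can only weakly decrease; and if the guard on line~\ref{line:level-second-check} holds (so the until-condition on line~\ref{line:level-until} will fail and the loop continues), then line~\ref{line:level-sample} pops one element from $A_\ell$, strictly decreasing $|A_\ell|$ by one. Hence in every iteration that does not terminate the loop, $|A_\ell|$ strictly decreases; since $|A_\ell|$ is a non-negative integer with a finite initial value (set on line~\ref{line:level-first-set-a} to the finite set $A_{\ell-1}\cup B_{\ell-1}$), after finitely many iterations $|A_\ell| < n/2^\ell$ and the loop exits. (Here $n/2^\ell = 2^{L-\ell}$ is a positive integer since $n=2^L$ and $\ell\le L$, so the comparison is with an integer, but this is not even needed.) Finally, as in the previous paragraph, the recursive call on line~\ref{line:level-next-level} operates only on levels $>\ell$ and does not modify $A_\ell$, so $|A_\ell| < n/2^\ell$ still holds when \LevelConstruct$(\ell)$ returns.

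\textbf{Main obstacle.} There is essentially no hard step; the only thing requiring care is making sure that the recursive call on line~\ref{line:level-next-level} does not undo either conclusion, which follows from the simple invariant that \LevelConstruct$(r)$ writes only to the sets indexed by $r$ (and reads only those indexed by $r$ and $r-1$). I would state this invariant explicitly once and reuse it for both parts.
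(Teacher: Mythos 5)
Your proposal is correct and follows the same route as the paper's proof: the bound on $|A_\ell|$ is exactly the exit condition of the repeat loop (line~\ref{line:level-until}), and $B_\ell$ is only ever written by line~\ref{line:level-b-empty}, which empties it. The extra details you supply (loop termination and the fact that the recursive call at levels $>\ell$ never touches the level-$\ell$ sets) are sound and merely make explicit what the paper leaves implicit.
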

\begin{proof}
    It follows directly from Line~\ref{line:level-until} in \LevelConstruct that $A_\ell < \frac{n}{2^\ell}$, otherwise the loop does not stop. Moreover, Line~\ref{line:level-b-empty} in \LevelConstruct is the only place that $B_\ell$ is changed and it is set to empty set.
\end{proof}

\begin{observation} \label{obs:size-b}
    For any level $0 \leq \ell \leq L$, during the execution of the algorithm $|B_\ell| \leq \frac{n}{2^\ell}$.
\end{observation}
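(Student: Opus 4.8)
The plan is to prove the slightly stronger statement that \emph{at the end of every operation} we have $|B_\ell| \le \frac{n}{2^\ell} - 1$ for every level $0 \le \ell \le L$, and then to observe that within a single operation a buffer can grow by at most one element, so the bound $|B_\ell| \le \frac{n}{2^\ell}$ in fact holds at all intermediate moments as well. I would run an induction on the number of operations processed so far, using as invariant the end-of-operation bound above. (Here I use that $n$ is a power of two and $\ell \le L = \log n$, so $\frac{n}{2^\ell}$ is a positive \emph{integer}; this is what turns the strict inequality $|B_\ell| < \frac{n}{2^\ell}$ into $|B_\ell| \le \frac{n}{2^\ell}-1$.)

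First I would record the complete bookkeeping of how $B_\ell$ changes: it is created empty by \Init; it gains exactly one element when a new element is inserted (line~\ref{line:insertion-add-e} of \Insertion); it loses at most one element on a deletion (line~\ref{line:deletion_buffers} of \Deletion); and it is reset to $\emptyset$ at the start of every execution of \LevelConstruct at level $\ell$ (line~\ref{line:level-b-empty}). Crucially, no line ever adds more than one element to $B_\ell$ during a single operation, and no line of \LevelConstruct (in particular line~\ref{line:level-first-set-a}, which only \emph{reads} $B_{\ell-1}$) ever enlarges a buffer. The base case is immediate since \Init leaves $|B_\ell| = 0 \le \frac{n}{2^\ell}-1$.

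For the inductive step I would split on the type of operation. If operation $i$ is a deletion, buffers can only shrink (line~\ref{line:deletion_buffers}), and a possible subsequent \LevelConstruct only empties some of them, so the invariant is trivially preserved. If operation $i$ is the insertion of an element $e$, then right after line~\ref{line:insertion-add-e} we have $|B_\ell| \le \frac{n}{2^\ell}$ for all $\ell$ by the inductive hypothesis. If the test in line~\ref{line:insertion-check} fails, then every buffer has integer size strictly below $\frac{n}{2^\ell}$, i.e.\ at most $\frac{n}{2^\ell}-1$, and we are done. Otherwise let $\ellstar$ be the smallest index with $|B_{\ellstar}| \ge \frac{n}{2^{\ellstar}}$; the call $\LevelConstruct(\ellstar)$ recurses through levels $\ellstar, \ellstar+1, \dots, L$, so by line~\ref{line:level-b-empty} (cf.\ \cref{obs:size-a-b}) it leaves $|B_\ell| = 0$ for every $\ell \ge \ellstar$, while every level $\ell < \ellstar$ is untouched by this call and, by minimality of $\ellstar$, already satisfies $|B_\ell| < \frac{n}{2^\ell}$ and hence $|B_\ell| \le \frac{n}{2^\ell}-1$. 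This re-establishes the invariant at the end of operation $i$.

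Finally, to obtain the statement exactly as phrased — a bound that holds throughout the execution, not merely between operations — I would note that the only point at which $B_\ell$ can momentarily exceed its end-of-operation value is immediately after line~\ref{line:insertion-add-e} of an \Insertion, where its size is at most its previous end-of-operation value plus one, i.e.\ at most $\left(\frac{n}{2^\ell}-1\right)+1 = \frac{n}{2^\ell}$; every later modification within that operation only removes elements from $B_\ell$ or empties it. This is a plain monotonicity/bookkeeping argument with no real obstacle; the one place that needs a little care is verifying that the recursion of \LevelConstruct really clears $B_\ell$ at all levels $\ell \ge \ellstar$ while leaving all levels below $\ellstar$ untouched — which is precisely where the minimality of $\ellstar$ in line~\ref{line:insertion-check} is used.
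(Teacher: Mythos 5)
Your proof is correct and takes essentially the same approach as the paper: the size of $B_\ell$ grows only by one per insertion (line \ref{line:insertion-add-e} of \Insertion), and whenever it reaches $\frac{n}{2^\ell}$ a direct or recursive call of $\LevelConstruct(\ell)$ resets it to empty (line \ref{line:level-b-empty}, via \cref{obs:size-a-b}). You merely formalize the paper's terse bookkeeping as an induction over operations with the strengthened end-of-operation invariant $|B_\ell| \le \frac{n}{2^\ell}-1$, making explicit the integrality of $\frac{n}{2^\ell}$ and the minimality of $\ellstar$ that the paper leaves implicit.
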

\begin{proof}
    The only place where the size of $B_{\ell}$ increases is in Line~\ref{line:insertion-add-e} of \Insertion, where it increases by at most one. When $|B_\ell| = \frac{n}{2^\ell}$, then $\LevelConstruct(\ell)$ is called directly from Line~\ref{line:insertion-invokes-level-construct} of \Insertion or indirectly in Line~\ref{line:level-next-level} of \LevelConstruct. In both cases $|B_\ell| = 0$ due to \cref{obs:size-a-b}.
\end{proof}

    \begin{observation} \label{obs:size-a-always}
        For any level $0 \leq \ell \leq L$, during the execution of the algorithm $|A_\ell| \leq \frac{n}{2^{\ell-2}}$.
    \end{observation}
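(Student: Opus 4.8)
The plan is to identify every step of the algorithm that can \emph{enlarge} $A_\ell$ and to verify that none of them pushes $|A_\ell|$ above $n/2^{\ell-2}$. Inspecting the pseudocode, $A_\ell$ is modified only by line~\ref{line:deletion_candidate} of \Deletion (which removes an element), by lines~\ref{line:A_l_filter} and~\ref{line:level-sample} of \LevelConstruct (which respectively filter out elements and pop a single one), and by line~\ref{line:level-first-set-a} of \LevelConstruct --- the assignment $A_\ell \gets A_{\ell-1}\cup B_{\ell-1}$ performed at the start of a call to $\LevelConstruct(\ell)$; note that \Insertion never touches any $A$-set. Only the last of these can increase $|A_\ell|$, so it suffices to bound $|A_{\ell-1}\cup B_{\ell-1}|\le|A_{\ell-1}|+|B_{\ell-1}|$ at the instant line~\ref{line:level-first-set-a} is executed, since from that instant until the next call to $\LevelConstruct(\ell)$ the size of $A_\ell$ can only decrease.

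\Cref{obs:size-b} already bounds $|B_{\ell-1}|\le n/2^{\ell-1}$ at all times, so the crux is to show that $|A_{\ell-1}| < n/2^{\ell-1}$ \emph{at the moment line~\ref{line:level-first-set-a} of $\LevelConstruct(\ell)$ is run}. By the observation above, $A_{\ell-1}$ starts out empty and is subsequently enlarged only inside the repeat loop (lines~\ref{line:level-from}--\ref{line:level-until}) of a call to $\LevelConstruct(\ell-1)$; by \Cref{obs:size-a-b} it satisfies $|A_{\ell-1}|<n/2^{\ell-1}$ once the repeat loop of any such call has exited; and in between it is touched only by deletions, which do not increase it. Hence $|A_{\ell-1}|<n/2^{\ell-1}$ except possibly while a call to $\LevelConstruct(\ell-1)$ is inside its repeat loop. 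It then remains to observe that line~\ref{line:level-first-set-a} of $\LevelConstruct(\ell)$ is never executed during such a window: a call to $\LevelConstruct(\ell)$ is triggered either directly from \Insertion (line~\ref{line:insertion-invokes-level-construct}) or \Deletion (line~\ref{line:deletion-level}) --- hence not from within $\LevelConstruct(\ell-1)$ at all --- or recursively from line~\ref{line:level-next-level} of $\LevelConstruct(\ell-1)$, which is reached only after that routine's repeat loop has already terminated (and after line~\ref{line:level-b-empty} has set $B_{\ell-1}=\emptyset$). In either situation $|A_{\ell-1}|<n/2^{\ell-1}$, so immediately after line~\ref{line:level-first-set-a} we get $|A_\ell|\le|A_{\ell-1}|+|B_{\ell-1}|<n/2^{\ell-1}+n/2^{\ell-1}=n/2^{\ell-2}$, and this bound is preserved until $A_\ell$ is next reassigned.

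Finally, the level $\ell=0$ is a degenerate base case: reading line~\ref{line:level-first-set-a} with the natural convention $A_{-1}=B_{-1}=\emptyset$ gives $|A_0|=0\le 4n=n/2^{-2}$; alternatively, since every $A_\ell$ only ever holds elements that are currently present (deletions purge it and insertions never add to it), $|A_\ell|$ is bounded by the number of present elements, which is at most $n$, and this already settles all levels $\ell\le 2$. I do not anticipate a real obstacle, as there is essentially no computation involved; the step that requires care is the one in the previous paragraph --- making rigorous that the only \emph{transient} excursion of $A_\ell$ above $n/2^\ell$ happens at line~\ref{line:level-first-set-a}, and that at that very instant level $\ell-1$ is already in its post-reconstruction state (so its $A$-set obeys the level-$(\ell-1)$ case of \Cref{obs:size-a-b}) rather than being midway through its own repeat loop.
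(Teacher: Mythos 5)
Your proof is correct and follows essentially the same route as the paper's: the only step that can increase $|A_\ell|$ is the assignment $A_\ell \gets A_{\ell-1}\cup B_{\ell-1}$, which you bound by $|B_{\ell-1}|\le n/2^{\ell-1}$ (\cref{obs:size-b}) plus $|A_{\ell-1}|< n/2^{\ell-1}$ (\cref{obs:size-a-b}, since $A_{\ell-1}$ has not grown since its last reconstruction), giving $n/2^{\ell-2}$. You merely make explicit a timing detail the paper compresses into one sentence --- that this assignment can only run after level $\ell-1$'s repeat loop has finished, never during its transient excursion (note the enlargement of $A_{\ell-1}$ actually happens at line~\ref{line:level-first-set-a}, before its repeat loop, but your exclusion argument covers that window anyway) --- together with the degenerate small-$\ell$ cases.
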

    \begin{proof}
        For any level $\ell$, the cardinality of $A_{\ell}$ only varies in two cases: when an element in $A_{\ell}$ is removed from the stream (line~\ref{line:deletion_candidate} of \Deletion) or during a call of $\LevelConstruct~$ on level $\ell$. Since the latter decreases the cardinality of $A_{\ell}$, we only study the former. When \LevelConstruct~$(\ell)$ is called, $A_{\ell}$ is initialized in line~\ref{line:level-first-set-a} and then its cardinality only decreases (Lines~\ref{line:A_l_filter}~and~\ref{line:level-sample}). To conclude the proof is then sufficient to prove that, every time $A_{\ell}$ is initialized in a new call of \LevelConstruct~$(\ell)$, its cardinality is at most $\tfrac{n}{2^{\ell-1}}$. The set $A_{\ell}$ is initialized with the elements in $B_{\ell-1}$ and in $A_{\ell-1}$. We know that $|B_{\ell-1}| \le \tfrac{n}{2^{\ell-1}}$ at any time of the algorithm (\Cref{obs:size-b}), while the cardinality of $|A_{\ell-1}|$ did not increase since the end of last call of \LevelConstruct~$(\ell-1)$. All in all, using the bound in \Cref{obs:size-a-b} we get the desired bound:
        \begin{align}\label{align:bound-a-all}
                |A_\ell| \leq \frac{n}{2^{\ell-1}} + \frac{n}{2^{\ell-1}} =  \frac{4n}{2^\ell}.
        \end{align}

        The size of $B_{\ell}$ only increases in Line~\ref{line:insertion-add-e} of \Insertion, where it increases by at most one. When $|B_\ell| = \frac{n}{2^\ell}$, then $\LevelConstruct(\ell)$ is called directly from Line~\ref{line:insertion-invokes-level-construct} of \Insertion or indirectly in Line~\ref{line:level-next-level} of \LevelConstruct. In both cases $|B_\ell| = 0$ due to \cref{obs:size-a-b}.
    \end{proof}

    Before moving to \LevelConstruct, we show that it is possible to compute the candidate swaps $s_e$ in line \ref{line:find-swap} in $O(\log k)$ calls of the independence oracle of the matroid.

\begin{lemma}
\label{lem:swap-speed-up}
    For any element $e \in A_{\ell}$ it is possible to find the candidate swap $s_e$ in line \ref{line:find-swap} of \LevelConstruct in $O(\log k)$ calls of the independence oracle of the matroid.
\end{lemma}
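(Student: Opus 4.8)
The plan is to exploit two facts that the data structure already maintains for free: the partial solution $S_\ell$ is always independent (so $|S_\ell| \le k$), and it is stored sorted in non-decreasing order of the stored weights. Write $S_\ell = \{y_1,\dots,y_m\}$ with $w(y_1)\le\cdots\le w(y_m)$ and $m=|S_\ell|\le k$. If $S_\ell+e\in\cM$ then $e$ lands in $E_\ell$ (line~\ref{line:E_l}) and $s_e$ plays no role in the filtering of lines~\ref{line:E_l}--\ref{line:A_l_filter}, so we may assume $S_\ell+e\notin\cM$; we may also assume $\{e\}\in\cM$, since a loop $e$ admits no feasible swap at all and is simply discarded in $O(1)$. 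For $j=0,1,\dots,m$ define the nested sets $R_j \eqdef (S_\ell\setminus\{y_1,\dots,y_j\})+e$, so that $R_0=S_\ell+e\notin\cM$, $R_m=\{e\}\in\cM$, and $R_{j+1}=R_j-y_{j+1}\subseteq R_j$.

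First I would observe that the predicate ``$R_j\in\cM$'' is monotone in $j$: by downward-closure, $R_j\in\cM$ implies $R_{j+1}\in\cM$. Hence there is a well-defined threshold $j^\star=\min\{\,j:R_j\in\cM\,\}\in\{1,\dots,m\}$, and it can be located with $\lceil\log_2 m\rceil=O(\log k)$ independence-oracle queries by binary search, each query testing a single $R_j$, which is read off in $O(1)$ from the sorted representation of $S_\ell$ and costs no value-oracle calls.

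The main step is to identify the binary-search threshold with the minimum-weight feasible swap, i.e.\ to show $s_e=y_{j^\star}$. For this I would invoke the standard fundamental-circuit property of matroids: since $S_\ell\in\cM$ and $S_\ell+e\notin\cM$, the set $S_\ell+e$ contains a unique circuit $C$, we have $e\in C$, and for every $y\in S_\ell$ it holds that $S_\ell-y+e\in\cM$ if and only if $y\in C$. Thus the set of feasible swaps in line~\ref{line:find-swap} is exactly $C\cap S_\ell$, and $s_e$ is its minimum-weight element; let $a=\min\{\,j:y_j\in C\,\}$ be its index in the sorted order. For $j<a$ the prefix $\{y_1,\dots,y_j\}$ is disjoint from $C$, so $C\subseteq R_j$ and $R_j$ is dependent, giving $j^\star\ge a$. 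Conversely, $R_a=(S_\ell+e)\setminus\{y_1,\dots,y_a\}\subseteq (S_\ell+e)-y_a=S_\ell-y_a+e$, and the right-hand side is independent because $y_a\in C$, so $R_a\in\cM$ and $j^\star\le a$. Therefore $j^\star=a$ and the binary search returns $y_{j^\star}=s_e$.

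I expect the only real obstacle to be pinning down this last identification cleanly — in particular being careful that line~\ref{line:find-swap} removes a \emph{single} element ($S_\ell-y+e$) while the binary search manipulates the light-prefix sets $R_j$; the fundamental-circuit characterization is precisely what bridges the two, and the sortedness of $S_\ell$ is what makes the light prefixes the correct family to search over. The boundary cases ($S_\ell+e$ already independent, $e$ a loop, or $S_\ell=\emptyset$) are immediate and need only a sentence each.
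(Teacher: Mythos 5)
Your proof is correct and takes essentially the same approach as the paper: a binary search over the weight-sorted solution (your light-prefix sets $R_j$ are exactly the paper's heavy prefixes plus $e$, up to re-indexing), with monotonicity of the independence test supplied by downward-closure, and the trivial cases handled separately. The only difference is how the threshold is identified with the minimum-weight feasible swap: you invoke the fundamental-circuit property of matroids, whereas the paper derives the same fact directly from the augmentation axiom (to show the threshold element is a valid swap) and downward-closure (to show no lighter element is); both justifications are sound.
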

\begin{proof}
    Consider any iteration of the repeat loop in \LevelConstruct, let $S$ be the solution, $A$ the candidate set (we omit the dependence on $\ell$ for simplicity) and $e$ any element in it. If $S + e \in \cM$ then $s_e$ is set to the empty set and the claim holds. Otherwise, call $C$ the set of all elements in $S$ that can be swapped with $e$ to obtain an independent set:
    \[
        C = \{y \in S \mid S - y + e \in \cM\}.
    \]
    It is immediate to see that $s_e \in \argmin_{y \in C} w(y)$. We know that the solution $S=\{x_1,x_2,\dots, x_j\}$ is maintained in decreasing order of weights (resolving ties arbitrarily) and, by the downward closure property of matroids, we can use binary search to find 
    \[
        i^* = \max\{i \mid\{x_1,\dots,x_{i-1}\}+e \in \cM \}. 
    \]
    We claim that $x_{i^*}$ is a good choice of $s_e$, i.e., that $x_{i^*} \in \argmin_{y \in C} w(y)$.
    
    First, note that $x_
    {i^*}$ belongs to $C$. To see this, consider the set $R = \{x_1,\dots x_{i^*-1}\} + e \in \cM$, and recursively add element from $S$ to it while keeping $R$ independent. By the augmentation property, we know that this is possible until $|R| = |S|$. A single element remains in $S \setminus R$ and it has to be $x_{i^*}$, as we know that $\{x_1,\dots,x_{i^*}\} + e$ is dependent, thus $S - x_{i^*} + e = R \in \cM$. Now, we show that no element in $C$ can have smaller weight (i.e. larger index) than $x_{i^*}$. Assume toward contradiction that this is the case, i.e. that there is an $x_j$ such that $S - x_j + e \in \cM$ and $j < i^*.$ This implies that $\{x_1, \dots x_{j-1}\} + e$ is independent, which contradicts the minimality of $i^*.$
\end{proof}

\begin{lemma} \label{lemma:level-running-time}
    For any level $0 \leq \ell \leq L$, the running time of $\LevelConstruct(\ell)$ is $O( \frac{n\rank\log \Delta \log k}{2^{\ell}})$.
\end{lemma}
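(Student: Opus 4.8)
The plan is to bound the cost of $\LevelConstruct(\ell)$ by multiplying the number of iterations of its repeat loop by the cost of a single iteration, relying on \Cref{obs:size-a-always} and \Cref{lem:swap-speed-up} for the easy parts and on a potential argument for the number of iterations.

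First I would bound a single iteration of the repeat loop (lines~\ref{line:begin-swap-loop}--\ref{line:add_S'_l}). Its only calls to the two oracles are: for each $e \in A_\ell$, one value-oracle call to compute $w(e)$ in line~\ref{line:level-weight}, $O(\log k)$ independence-oracle calls to compute $s_e$ in line~\ref{line:find-swap} by \Cref{lem:swap-speed-up}, and one independence-oracle call to decide $e \in E_\ell$ in line~\ref{line:E_l}; forming $F_\ell$, popping a random element, and performing the swap only reuse already-computed quantities. By \Cref{obs:size-a-always} we have $|A_\ell| = O(n/2^\ell)$ at all times, so one iteration costs $O(|A_\ell|\log k) = O(\tfrac{n\log k}{2^\ell})$ oracle calls.

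The crux is to show that one call of $\LevelConstruct(\ell)$ (excluding the recursive call in line~\ref{line:level-next-level}) executes $O(k\log\Delta)$ iterations of the repeat loop. Since every iteration but possibly the last executes the pop in line~\ref{line:level-sample}, each of which adds exactly one element to $S_\ell$, it is enough to bound the number of such additions. I would split them into \emph{type-A} additions, where $s_e=\emptyset$ and thus $S_\ell+e\in\cM$, and \emph{type-B} swaps $S_\ell\gets S_\ell-s_e+e$, which occur only when $w(e)>2w(s_e)\ge 0$, so in particular $w(e)>0$. Type-A additions strictly increase $|S_\ell|$, which starts at $|S_{\ell-1}|\le k$ and never decreases within a call, so there are at most $k$ of them. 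For type-B I would use the potential $\Phi=\sum_{y\in S_\ell:\,w(y)>0}\bigl(1+\lfloor\log_2(w(y)/\mu)\rfloor\bigr)$, where $\mu=\min_{T\subseteq V,\,x\notin T_0}f(x\mid T)$ is the denominator of $\Delta$ and $w(y)$ is the weight frozen with $y$ at the moment it was inserted into $S_\ell$. Every such stored weight equals $f(x\mid T)$ for some set $T$ (possibly produced in an earlier call or at a lower level) and is therefore either $0$ or in $[\mu,M]$ with $M=\max_x f(x)$; hence each summand lies in $[1,1+\log_2\Delta]$ and, since $|S_\ell|\le k$, we get $0\le\Phi\le k(1+\log_2\Delta)$ throughout the call. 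A type-A addition changes $\Phi$ by a nonnegative amount, while a type-B swap increases it by at least $1$: when $w(s_e)>0$, from $w(e)>2w(s_e)$ we get $\lfloor\log_2(w(e)/\mu)\rfloor\ge 1+\lfloor\log_2(w(s_e)/\mu)\rfloor$, and when $w(s_e)=0$ the removed element contributes nothing to $\Phi$ while the inserted one contributes a summand $\ge 1$. Thus $\Phi$ is non-decreasing over the call and the number of type-B swaps is at most its total increase, namely $k(1+\log_2\Delta)=O(k\log\Delta)$. Adding the two types gives $O(k\log\Delta)$ additions, hence $O(k\log\Delta)$ iterations, and multiplying by the per-iteration cost shows the level-$\ell$ work is $O(\tfrac{nk\log\Delta\log k}{2^\ell})$. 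Writing $T(\ell)$ for the total cost including the recursive call in line~\ref{line:level-next-level}, we obtain $T(\ell)\le O(\tfrac{nk\log\Delta\log k}{2^\ell})+T(\ell+1)$, a geometric recursion over $\ell,\dots,L$ that solves to $T(\ell)=O(\tfrac{nk\log\Delta\log k}{2^\ell})$.

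I expect the potential argument to be the step requiring care: one must verify that it is sound to compare the freshly recomputed weight $w(e)$ against the frozen stored weight $w(s_e)$ (recorded possibly in an earlier call, and even at a lower level, relative to a smaller set $S'$), and that zero-valued marginals are handled correctly by excluding them from $\Phi$ as above. Everything else is direct accounting on top of \Cref{obs:size-a-always} and \Cref{lem:swap-speed-up}.
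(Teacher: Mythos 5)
Your proposal is correct and follows essentially the same route as the paper: bound each iteration of the repeat loop by $O(|A_\ell|\log k)=O(\tfrac{n\log k}{2^\ell})$ oracle calls via \Cref{obs:size-a-always} and \Cref{lem:swap-speed-up}, bound the number of iterations by $O(k\log\Delta)$, and resolve the recursion over levels geometrically (the paper does this by induction from $L$ down to $\ell$, which is equivalent). The only difference is that your potential-function argument carefully formalizes the iteration bound that the paper asserts in one sentence (at most $k$ additions without swaps, plus weight-doubling swaps bounded via the definition of $\Delta$), and that formalization is sound, including the handling of zero-weight elements and of frozen stored weights.
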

\begin{proof}
    
    We prove this Lemma in two steps. First, we control the running time of the non-recursive part of \LevelConstruct~$(\ell)$ (i.e., all the algorithm but the recursive call in line~\ref{line:level-next-level}), then we study, by induction, how these bounds combine recursively.

    We start proving that, for every level $\ell$ from $0$ to $L$, the running time of the non-recursive part of \LevelConstruct~$(\ell)$ is dominated by $c\frac{n\rank\log \rank\log \Delta}{2^{\ell}}$, for some positive constant $c$. Focus on any level $\ell$ and consider any call of \LevelConstruct~$(\ell)$. The main computation is performed in the repeat loop (Lines~\ref{line:level-from}-\ref{line:level-until}), which is repeated at most $\rank\log\Delta $ times (due to the exponential increase of the weight assigned to the elements, the fact we can at most add $k$ elements without swapping and the definition of $\Delta$). In each iteration of the repeat loop, the running time of finding the swap elements in Lines~\ref{line:begin-swap-loop}-\ref{line:level-set-a-loop} is at most order of $|A_{\ell}| \log \rank$, which is in $O( \frac{n\log \rank}{2^{\ell}})$ (recall, the cardinality of $A_{\ell}$ is bounded in \Cref{obs:size-a-always}). This concludes the analysis of the non recursive part of \LevelConstruct: there exists a constant $c>0$ such that its running time is at most $c\frac{n\rank\log \rank\log \Delta}{2^{\ell}}$, for every level $\ell$ from $0$ to $L$.

    We now conclude the proof of the Lemma by induction on $\ell$ from $L$ to $0$. More precisely, we show that the (overall) running time of $\LevelConstruct(\ell)$ is bounded by $2c\frac{n\rank\log \rank\log \Delta}{2^{\ell}}$ for any level $0 \leq \ell \leq L$. We start considering the base case $\ell = L$. As there is no recursive call, the inequality immediately descends on the bound on the non-recursive running time of \LevelConstruct~$(L)$. For the induction step, assume that the running time of $\LevelConstruct(\ell+1)$ is upper bounded by $2c\frac{n\rank\log \rank\log \Delta}{2^{\ell+1}}$ and we show that the same holds for level $\ell$. This is pretty easy to see: the interval running time of \LevelConstruct~$(\ell)$ is at most $c\frac{n\rank\log \rank\log \Delta}{2^{\ell}}$, while the recursive call to \LevelConstruct~$(\ell+1)$ has running time $2c\frac{n\rank\log \rank\log \Delta}{2^{\ell+1}}$ by the inductive hypothesis. Summing up these two terms yields the desired result.
\end{proof}

We have just assessed a deterministic upper bound on the computational complexity of each call of \LevelConstruct. We now bound the number of times that \LevelConstruct is called during the stream of insertions and deletions. To do so, we bound separately the number of times \LevelConstruct is directly induced by \Deletion or \Insertion.

\begin{lemma} \label{lemma:level-num-call-delete}
    For any level $0 \leq \ell \leq L$, the expected number of times that the {\LevelConstruct~$(\ell)$} function is called from \Deletion is at most $2^{\ell + 3}\rank \log \Delta$.
\end{lemma}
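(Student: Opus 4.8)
The plan is to charge every call of \LevelConstruct$(\ell)$ that is issued directly by \Deletion to the uniformly random pop in line~\ref{line:level-sample} that inserted the deleted element into $S_\ell$, and then to bound the number of such charges using that the adversary is oblivious. Call a \emph{phase} of level $\ell$ a maximal time interval on which $S_\ell$ does not change. Since $S_\ell$ is modified only inside \LevelConstruct$(\ell)$ (lines~\ref{line:level-s_l} and~\ref{line:add_S_l}), and since every call of \LevelConstruct$(\ell')$ with $\ell'\le\ell$ recursively triggers \LevelConstruct$(\ell)$ (line~\ref{line:level-next-level}), a fresh phase of level $\ell$ begins exactly at each call of \LevelConstruct$(\ell)$. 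In particular, a deletion that causes \LevelConstruct$(\ell)$ from \Deletion ends the current phase of level $\ell$.

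The first step is the structural claim that, if the deletion of an element $e$ triggers \LevelConstruct$(\ell)$ from \Deletion, then $e$ entered $S_\ell$ via the random pop in line~\ref{line:level-sample} during the \emph{current} phase of level $\ell$. Such an $e$ satisfies $e\in S_\ell$ but $e\notin S_{\ell-1}$ at the moment of deletion (the latter because $\ell$ is the smallest index with $e\in S_\ell$). Let $\tau$ be the start of the current phase; on the interval from $\tau$ to the deletion both $S_\ell$ and $S_{\ell-1}$ are frozen --- $S_\ell$ by construction, and $S_{\ell-1}$ because any change to it would, through the recursion, force a new \LevelConstruct$(\ell)$ call and hence a new phase. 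At $\tau$ the routine executes $S_\ell\gets S_{\ell-1}$ and afterwards only modifies $S_\ell$ by pops and swaps; had $e$ been one of these inherited elements it would still lie in the (frozen) $S_{\ell-1}$ at deletion time, a contradiction. Hence $e$ was popped during the build performed at $\tau$. (For $\ell=0$ the claim is immediate since $S_0$ is rebuilt from the empty set at every call.)

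The second step is probabilistic. Fix a deletion operation that removes a fixed element $e^\star$ (fixed, because the adversary is oblivious). By the structural claim, this operation triggers \LevelConstruct$(\ell)$ only if $e^\star$ was popped during the phase of level $\ell$ that contains it. Conditioning on all randomness produced strictly before the start $\tau$ of that phase, the build at $\tau$ is a deterministic function of the (fixed) stream and performs at most $k\log\Delta$ pops: this is the bound on the number of iterations of the repeat loop established in the proof of \cref{lemma:level-running-time}, together with the fact that each iteration pops at most one element. Each pop is uniform over $A_\ell$, whose size is at least $n/2^\ell$ at that moment by the guard in line~\ref{line:level-second-check}, and $e^\star$ is independent of these fresh coins, so each pop equals $e^\star$ with probability at most $2^\ell/n$. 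A union bound over the at most $k\log\Delta$ pops, followed by an expectation over the earlier randomness, shows that this deletion triggers \LevelConstruct$(\ell)$ with probability at most $2^\ell k\log\Delta/n$. Summing over the at most $n$ deletions in the stream gives an expected number of calls at most $2^\ell k\log\Delta\le 2^{\ell+3}k\log\Delta$, with the constant $2^3$ left deliberately generous.

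I expect the main obstacle to be formalizing the conditioning: one must partition the timeline by the (random) instants at which \LevelConstruct$(\ell)$ runs, verify that the build performed at each such instant is measurable with respect to the past, and use that the identities of future deleted elements are unaffected by the upcoming coin flips. The structural claim is the other delicate point --- it is precisely what guarantees that ``old'' elements of $S_\ell$ (those pushed down the recursion from lower levels) can never be the cause of a level-$\ell$ deletion call, so that only the at most $k\log\Delta$ pops of the current phase are relevant to the charge.
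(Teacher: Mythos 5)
Your structural claim is correct and is indeed the combinatorial backbone that the paper also relies on: if the deletion of $e^\star$ directly triggers $\LevelConstruct(\ell)$, then $e^\star\in S_\ell\setminus S_{\ell-1}$, and since $S_{\ell-1}$ cannot have changed since the last level-$\ell$ call (any change would have restarted level $\ell$ through the recursion), $e^\star$ must have been popped in line~\ref{line:level-sample} during that last call. Likewise, the per-call ingredients are right: at most $k\log\Delta$ pops per execution, each uniform over a set of size at least $n/2^\ell$ by the guard in line~\ref{line:level-second-check}, so \emph{conditioned on the history before a given call} the probability that $e^\star$ is popped in that call is at most $k2^\ell\log\Delta/n$.

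The gap is in converting this per-call bound into your per-deletion bound $\Pr[\text{deletion at }t\text{ triggers level }\ell]\le k2^\ell\log\Delta/n$. The call $\tau$ you condition on is ``the last level-$\ell$ call before $t$,'' which is not a stopping time: whether a given call turns out to be the last one before $t$ is decided by the very coins flipped in that call, because the popped elements are exactly the elements whose deletion would end the phase early. Concretely, the event $\{\tau\text{ is last before }t\}$ forces the popped set to avoid every element deleted in $(\tau,t)$, and conditioning on this can inflate the probability that $e^\star$ was popped well beyond $k2^\ell\log\Delta/n$ (think of a stream that deletes almost the entire candidate pool in between without triggering, leaving $e^\star$ as essentially the only element the pops could have landed on). So the union bound over the at most $k\log\Delta$ pops of ``the'' build does not give what you claim; what you flag as ``formalizing the conditioning'' is not a routine measurability check but the actual crux of the lemma, and your statement is even stronger (by the factor $2^3$ you call generous) than what the paper proves. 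The paper sidesteps the issue with a different accounting: it partitions time into epochs that begin at deletion-triggered calls (these \emph{are} stopping times, so conditioning there is legitimate), shows each new epoch is short --- fewer than $\tfrac{n}{k2^{\ell+1}\log\Delta}$ operations --- with probability at most $1/2$, bounds the number of long epochs deterministically by $2^{\ell+2}k\log\Delta$ using the total stream length $2n$, and combines these via $\E{\shortN}\le\E{\longN}$ to get $2^{\ell+3}k\log\Delta$. To rescue your charging argument you would need to prove a bound of the form $\Pr[e^\star\text{ popped in call }c\mid c\text{ is the last call before }t]\le O(k2^\ell\log\Delta/n)$, which does not follow from obliviousness of the adversary alone and is not established in your proposal.
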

\begin{proof}
    Fix any level $\ell$. As a first step in this proof we show that the probability that a set of $\frac{n}{\rank 2^{\ell+1} \log \Delta}$ deletions hit at least one element sampled in Line~\ref{line:level-sample} of $\LevelConstruct(\ell)$ is at most $1/2$ (note, some of the elements sampled may get swapped out in the same $\LevelConstruct(\ell)$ execution). The reason is that there are at most  $\rank \log \Delta$ elements sampled from $A_{\ell}$. Each of these elements are sampled from a candidate pool of at least $\frac{n}{2^\ell}$ elements. Therefore the probability that any particular deleted element is from the sampled elements is at most $\frac{\rank 2^\ell \log \Delta}{n}$. The claim follows by union bound over all the $\frac{n}{\rank 2^{\ell+1} \log \Delta}$ deletions. 
    
    We call a period between two invocation of $\LevelConstruct(\ell)$ from $\Deletion$ an epoch, and denote with $N_{\ell}$ the (random) number of such epochs. We call an epoch short if its length is less than $\frac{n}{\rank 2^{\ell+1} \log \Delta}$ and long otherwise. We denote with $\shortN$ and $\longN$ the number of short, respectively long, epochs. Every time we recompute $\LevelConstruct(\ell)$, the probability of the next epoch being short is at most $1/2$. So we have:
    \begin{align}
    \label{eq:epochs}
        \E{\shortN} \leq \frac 12 \E{N_{\ell}} = \frac 12 \E{\shortN} + \frac 12 \E{\longN}, 
    \end{align}
    which implies that $\E{\shortN} \le \E{\longN}.$
    We know that the number $\longN$ of long epochs is at most the total number of operations $2n$, divided by the lower bound on the cardinality of each long epoch, $ \frac{n}{\rank 2^{\ell+1} \log \Delta}$. All in all, $\longN \le  2^{\ell+2} \rank \log \Delta$. We are ready to conclude:
    \[
         \E{N_{\ell}} =  \E{\shortN} + \E{\longN} \le 2 \E{\longN}\le  2^{\ell+3} \rank \log \Delta. \qedhere    
    \]
\end{proof}

\begin{lemma} \label{lemma:level-num-call-insert}
    For any level $0 \leq \ell \leq L$, the number of times that the $\LevelConstruct(\ell)$ function is called from \Insertion is at most $2^{\ell}$.
\end{lemma}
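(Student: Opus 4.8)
The plan is to charge each invocation of $\LevelConstruct(\ell)$ triggered by \Insertion to a block of at least $n/2^\ell$ distinct insertions, so that — since the stream contains exactly $n$ insertions — the number of such invocations is at most $n/(n/2^\ell) = 2^\ell$.

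First I would pin down the dynamics of the buffer $B_\ell$. Inspecting the pseudocode, $B_\ell$ is touched in only three places: it is incremented by one in line~\ref{line:insertion-add-e} of \Insertion, decremented by at most one in line~\ref{line:deletion_buffers} of \Deletion, and reset to $\emptyset$ in line~\ref{line:level-b-empty} of \LevelConstruct. That reset fires whenever \LevelConstruct is invoked at any level $\ell' \le \ell$, because of the recursive cascade in line~\ref{line:level-next-level}; in particular it fires at the end of every call of $\LevelConstruct(\ell)$ regardless of what triggered it, and it leaves $B_\ell$ empty, since the ensuing recursive calls only touch higher-index buffers. Hence $|B_\ell|$ can only grow through insertions, and by at most one per insertion.

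Next I would use the trigger condition from line~\ref{line:insertion-check}: $\LevelConstruct(\ell)$ is called from \Insertion at an operation only if, right after the new element has been appended to every buffer, $|B_\ell| \ge n/2^\ell$. Let $o_1 < o_2 < \cdots < o_c$ be the operations at which this happens and set $o_0 = 0$. For each $1 \le j \le c$, just after operation $o_{j-1}$ the buffer $B_\ell$ is empty — by \Init if $j=1$, and by the reset in line~\ref{line:level-b-empty} during the call at $o_{j-1}$ otherwise — and over the operations $o_{j-1}+1,\dots,o_j$ its size increases only on insertions and by at most one each time. Since $|B_\ell|$ must reach $n/2^\ell$ by operation $o_j$, at least $n/2^\ell$ of the operations in the window $(o_{j-1},o_j]$ are insertions. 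These $c$ windows are pairwise disjoint, the stream has exactly $n$ insertions, so $c\cdot (n/2^\ell)\le n$, i.e.\ $c\le 2^\ell$. The lemma is short; the only point requiring care is the bookkeeping around $B_\ell$, namely that resets caused by deletions, or by \LevelConstruct calls at levels below $\ell$ (from either \Insertion or \Deletion), together with buffer decrements from deletions, can only delay $B_\ell$ from refilling to its threshold and therefore only help the bound. I expect that to be the sole place where the argument could be mis-stated.
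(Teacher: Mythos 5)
Your proof is correct and follows essentially the same route as the paper: the buffer $B_\ell$ grows only by one per insertion, is reset to empty whenever $\LevelConstruct(\ell)$ runs, and a call from \Insertion requires $|B_\ell|$ to reach $n/2^\ell$, so with $n$ insertions there can be at most $2^\ell$ such calls. Your version merely spells out the bookkeeping (resets from lower-level cascades and decrements from deletions only delay refilling) that the paper's terser proof leaves implicit.
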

\begin{proof}
    The only place where the size of set $B_{\ell}$ increases is in  Line~\ref{line:insertion-add-e} of \Insertion, and it increases by at most one per insertion. Moreover, $B_\ell$ is set to the empty set in Line~\ref{line:level-b-empty} of $\LevelConstruct(\ell)$. Also there are at most $n$ insertions and $\LevelConstruct(\ell)$ is called when the size of $B_\ell$ is equal to $\frac{n}{2^\ell}$. Therefore there are at most $n\frac{2^\ell}{n} = 2^\ell$ calls to $\LevelConstruct(\ell)$ from \Insertion.
\end{proof}

\begin{lemma}\label{lem:running}
    The average running time per operation is $O(\rank^2\log \rank \log^2 \Delta \log n)$, in expectation over the randomness of the algorithm.
\end{lemma}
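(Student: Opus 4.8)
The plan is to combine the three "counting" lemmas just proved with the per-call running time bound of \Cref{lemma:level-running-time}, summing over all levels. Fix a level $\ell$. By \Cref{lemma:level-running-time}, each invocation of $\LevelConstruct(\ell)$ costs $O\!\rb{\frac{n k \log \Delta \log k}{2^\ell}}$. By \Cref{lemma:level-num-call-delete} and \Cref{lemma:level-num-call-insert}, the expected number of invocations of $\LevelConstruct(\ell)$ that originate \emph{directly} at level $\ell$ (whether from \Deletion or from \Insertion) is at most $2^{\ell+3} k \log \Delta + 2^{\ell} = O(2^{\ell} k \log \Delta)$. The subtle point is that $\LevelConstruct(\ell)$ is also called \emph{recursively} (line~\ref{line:level-next-level}) from $\LevelConstruct(\ell-1)$; but the cost of that recursive call is already accounted for inside the cost charged to the originating call at the lower level, so to avoid double counting I will charge the cost of an \emph{entire} recursive cascade $\LevelConstruct(\ell), \LevelConstruct(\ell+1), \dots, \LevelConstruct(L)$ to the level $\ell$ at which it was triggered directly (by an \Insertion or a \Deletion). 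The total running time is then
\[
    \text{Total} \;=\; \sum_{\ell=0}^{L} (\text{\# direct calls at level } \ell)\cdot(\text{cost of the cascade starting at } \ell),
\]
and \Cref{lemma:level-running-time} already gives the cascade cost starting at $\ell$ as $O\!\rb{\frac{n k \log k \log \Delta}{2^\ell}}$ (this is exactly the "overall running time of $\LevelConstruct(\ell)$" bounded there, $2c\,\frac{nk\log k\log\Delta}{2^\ell}$).

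Carrying this out: for each $\ell$, the expected contribution is
\[
    O(2^{\ell} k \log \Delta)\cdot O\!\rb{\frac{n k \log k \log \Delta}{2^\ell}} \;=\; O\!\rb{n\, k^2 \log k \log^2 \Delta}.
\]
The $2^\ell$ factors cancel, so every level contributes the same (up to constants), and summing over the $L+1 = \log n + 1$ levels gives an expected total running time of $O\!\rb{n\, k^2 \log k \log^2 \Delta \log n}$. Dividing by the $2n$ operations (the amortization in the definition of amortized running time) yields the claimed $O\!\rb{k^2 \log k \log^2 \Delta \log n}$ amortized time per operation. I should also note that the $O(1)$-time-per-operation bookkeeping in \Insertion and \Deletion outside of \LevelConstruct (adding/removing $e$ from the $O(\log n)$ buffers and candidate sets, which involves no oracle calls) is dominated by this bound and does not affect the asymptotics, since running time is measured in oracle calls.

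The main obstacle — really the only place care is needed — is the accounting of recursive versus direct calls: one must be sure not to both (a) charge $\LevelConstruct(\ell+1)$'s cost inside the cascade cost of its parent \emph{and} (b) separately count $\LevelConstruct(\ell+1)$ as an invocation with its own cascade cost. The clean way is the one above: \Cref{lemma:level-num-call-delete} and \Cref{lemma:level-num-call-insert} count only the calls that \Insertion/\Deletion make \emph{directly} to some level, and \Cref{lemma:level-running-time}'s bound on the "overall running time of $\LevelConstruct(\ell)$" already bundles in the whole recursive tail; multiplying these two and summing is therefore exactly right, with no overlap. A secondary point to state explicitly is that \Cref{lemma:level-num-call-delete} bounds an \emph{expectation} while \Cref{lemma:level-running-time} and \Cref{lemma:level-num-call-insert} are deterministic, so by linearity of expectation the product-and-sum goes through and the final bound is in expectation over the algorithm's internal randomness, as stated.
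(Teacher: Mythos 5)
Your proposal is correct and follows essentially the same route as the paper: it multiplies the expected number of \emph{directly triggered} calls per level (\Cref{lemma:level-num-call-delete} plus \Cref{lemma:level-num-call-insert}) by the full-cascade cost bound of \Cref{lemma:level-running-time}, notes the $2^\ell$ cancellation, sums over the $O(\log n)$ levels, and divides by the $2n$ operations. Your explicit remarks on avoiding double counting of recursive calls and on linearity of expectation are exactly the (implicit) accounting in the paper's proof.
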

\begin{proof}
    The running time when an element is inserted or deleted is $O(L) =  O(\log n)$ beside the calls made to $\LevelConstruct~$. In what follows we focus on the total running time spent in $\LevelConstruct~$. There are two places that $\LevelConstruct(\ell)$ can be called from (beside the recursion in Line~\ref{line:level-next-level} of \LevelConstruct): \Insertion and \Deletion. By comparing \cref{lemma:level-num-call-insert} and \cref{lemma:level-num-call-delete} it results that the number of calls induced by \Deletion dominates those induced by \Insertion. So we only focus on the former term. Let $c$ be the constant as in the analysis of \cref{lemma:level-running-time}, we bound the total expected running time by 
    \begin{align*}
        2 \sum_{0 \leq \ell \leq L} \rank 2^{\ell+3} \log \Delta \cdot \left(  2c \cdot \frac{n\rank\log \rank\log \Delta}{2^{\ell}}\right)&= 32 c\sum_{0 \leq \ell \leq L} n\rank^2\log \rank \log^2 \Delta \\
        &= 32 c \cdot \left( n\rank^2 \log \rank \log^2 \Delta \log n\right). \qedhere
    \end{align*}
\end{proof}

\section{Putting it Together} \label{sec:all-together}
    The results in the previous sections hold under the assumption that the algorithm designer knows the number of insertion and deletions (denoted by $n$) in advance. In this section we present a well-known tool that enables our algorithm to run without this assumption. We simply start by $n=1$, and whenever the number of insertions and deletions reaches to $n$, we restart algorithm and double the value of $n$. Therefore, if the total operations is $m$, then largest value $n$ that we use is the smallest power of $2$ bigger than $m$, which is at most $2m$. Combining with \Cref{lem:running} we get that the total running time per operation is (up to multiplicative constants)
    \begin{align} \label{eq:all-runtime}
        \sum_{1 \leq n_0 \leq \log n} &\rank^2 \log \rank \log^2 \Delta n_0 = \rank^2 \log \rank \log^2 \Delta \log n^2 = \rank^2 \log \rank \log^2 \Delta \log^2 m\,.
    \end{align}

Combining \Cref{eq:all-runtime} and \Cref{thm:approx}, we have the main result of the paper.
\begin{theorem} \label{thm:main}
    Our algorithm yields a $4$-approximation to the fully dynamic monotone submodular maximization problem with matroid constraint and exhibit an $O(\rank^2\log \rank \log^2 \Delta \log^2 n)$ expected amortized running time.
\end{theorem}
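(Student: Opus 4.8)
The plan is to obtain \Cref{thm:main} by stitching together the \emph{deterministic} approximation guarantee of \Cref{thm:approx} and the \emph{expected} amortized running time of \Cref{lem:running}, wrapped inside the doubling reduction sketched at the beginning of \Cref{sec:all-together}. Concretely, I would partition the (unknown-length) stream into \emph{phases} indexed by $j = 0, 1, 2, \dots$: phase $j$ runs a fresh instance of the known-$n$ algorithm with parameter $n_0 = 2^j$ (hence with $\Theta(j)$ levels created by \Init), and remains active until that instance has processed $2^j$ operations, at which point phase $j+1$ starts with a new \Init. The one delicate point is the transition between phases: when phase $j+1$ begins its data structure is empty, so I would first feed it, as a batch of insertions, all elements currently alive in the stream. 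Since at most $\sum_{j'\le j} 2^{j'} < 2^{j+1}$ insertions can have occurred during phases $0,\dots,j$, this batch has size $O(2^j)$ --- the same order as the length of phase $j+1$ --- and I would treat the whole restart as an atomic sub-step of the single operation that triggers it.

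For the approximation claim, fix an operation $i$ and let $j$ be the active phase at that time. At the end of processing $i$, the phase-$j$ instance has been fed exactly the re-inserted alive set followed by the genuine operations of phase $j$ up to and including $i$; in particular, its internal ground set equals $V_i$. \Cref{thm:approx} then applies verbatim to this instance and gives, for every realization of the algorithm's randomness, $\OPT_i \le 4\, f(S^i)$, where $S^i$ is the solution it currently maintains (recall \Cref{thm:approx} is deterministic because $S_L$ always coincides with the output of \swapping on an ordering of a superset of $V_i$, and \swapping is a deterministic $4$-approximation by \Cref{thm:swapping}). As this holds for every $i$, the algorithm is a deterministic $4$-approximation; the atomic-restart convention ensures the maintained solution is only ever inspected when the active instance is in a consistent state, so the transient states during the re-insertion batch are never reported.

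For the running time, I would apply \Cref{lem:running} phase by phase. Phase $j$ processes $O(2^j)$ operations (the genuine ones plus the $O(2^j)$ re-insertions), and by \Cref{lem:running} with $n = 2^j$ its expected running time \emph{per operation} is $O(\rank^2 \log \rank \log^2 \Delta \cdot j)$, using $\log 2^j = j$. Bounding the amortized cost at any point of the stream by the sum of these per-phase per-operation bounds over all phases seen so far yields $\sum_{1 \le j \le \log n} O(\rank^2 \log \rank \log^2 \Delta \cdot j) = O(\rank^2 \log \rank \log^2 \Delta \log^2 n)$, which is precisely the quantity displayed in \Cref{eq:all-runtime} and matches the statement. (A slightly tighter $O(\rank^2\log\rank\log^2\Delta \log n)$ is in fact available by instead summing the geometric-type series $\sum_j 2^j\cdot O(\rank^2\log\rank\log^2\Delta\cdot j) = O(n\,\rank^2\log\rank\log^2\Delta\log n)$ of total costs and dividing by the number of operations, but the weaker bound already suffices here.)

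The main obstacle I anticipate is not any single estimate but the bookkeeping at phase boundaries: one must verify (i) that the re-inserted batch is genuinely $O(2^j)$ in size --- which follows because at most $2^{j+1}$ insertions can precede phase $j+1$ --- and (ii) that re-inserting this batch does not transiently violate the approximation invariant, which is taken care of by the atomic-restart convention. Once these two points are pinned down, the rest is a direct substitution into \Cref{thm:approx} and \Cref{lem:running} together with a geometric summation over the $O(\log n)$ phases.
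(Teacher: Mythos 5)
Your proposal is correct and follows essentially the same route as the paper: the standard doubling/restart trick combined with the deterministic guarantee of \cref{thm:approx} and the expected amortized bound of \cref{lem:running}, summed over the $O(\log n)$ phases exactly as in \cref{eq:all-runtime}. The only difference is that you spell out the phase-transition bookkeeping (re-inserting the $O(2^j)$ alive elements and treating the restart atomically) more explicitly than the paper, which simply invokes the reduction as a well-known tool; this affects nothing beyond constant factors.
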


In \cref{app:Delta} we also explain how one can avoid the dependency on $\Delta$ which requires: (i) designing and analyze a new algorithm that combines swapping and thresholding (presented in \cref{app:swapping}) and (ii) apply standard techniques to guess the value of  OPT and run multiple copies of the algorithm at the same time.

\begin{restatable}{corollary}{mainresultcoro}
    For any constant $\eps > 0$, there exists a $(4+O(\eps))$-approximation to the fully dynamic monotone submodular maximization problem with matroid constraint that exhibits an $O\left(\frac {\rank^2}\eps \log \rank \log^2 n \log^3 \frac \rank \eps\right)$ expected amortized running time.
\end{restatable}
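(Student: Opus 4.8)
The plan is to follow the two-step recipe announced above: replace \swapping inside our data structure by a thresholded variant, so that the running-time analysis of \cref{sec:all-together} becomes independent of $\Delta$, and then lift the resulting ``$\Delta$-free but $\OPT$-aware'' algorithm to one that does not know $\OPT$ by running several copies in parallel. First I would introduce \thresholdswapping (formally defined and analysed in \cref{app:swapping}): it behaves exactly like \swapping except that it is given an extra parameter $\tau\ge 0$ and treats as filtered-out --- never adding it to the solution nor swapping it in --- any arriving element $e$ whose weight $w(e)=f(e\mid S')$ is below $\tau$. Re-running the charging argument behind \cref{thm:swapping}, the only elements of the offline optimum $O$ that this new rule can make us lose had marginal value below $\tau$, so by submodularity they account in total for at most $|O|\,\tau\le \rank\tau$ of $\OPT$; hence choosing $\tau\le\eps\,\OPT/\rank$ degrades the guarantee only to $\tfrac{4}{1-\eps}=4+O(\eps)$ (still deterministic, for $\eps$ small enough, which is without loss of generality).

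Next I would plug \thresholdswapping into the data structure in place of \swapping, i.e.\ let the filters in lines \ref{line:E_l}--\ref{line:A_l_filter} of \LevelConstruct additionally discard every $e$ with $w(e)<\tau$. The approximation argument of \cref{sec:appr} carries over verbatim: the final solution $S_L$ is still exactly the output of \thresholdswapping on the ordering $\pi$, hence a $(4+O(\eps))$-approximation of the best independent set in $X\supseteq V$, and therefore of $\OPT$. For the running time, after discarding loops at insertion time (one independence query per inserted element, and loops are never used anyway), every element that is ever placed in a partial solution has weight in $[\tau,\max_{x\in V}f(x)]\subseteq[\tau,\OPT]$, so the exponential-weight potential argument bounding the number of iterations of the repeat loop of \LevelConstruct --- hence the number of sampled elements --- now goes through with $\Delta$ replaced by $\max_{x}f(x)/\tau$. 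Taking $\tau=\Theta(\eps\,\OPT/\rank)$ makes this quantity $O(\rank/\eps)$, so repeating \cref{lemma:level-running-time}, \cref{lemma:level-num-call-delete} and \cref{lem:running} (together with the doubling of \cref{sec:all-together}) word for word with $\log\Delta\mapsto\log(\rank/\eps)$ shows that one copy run with any threshold $\tau$ satisfying $\Theta(\eps\OPT/\rank)\le\tau\le\eps\OPT/\rank$ has expected amortized update time $O\!\big(\rank^2\log\rank\,\log^2(\rank/\eps)\,\log^2 n\big)$ and always outputs a $(4+O(\eps))$-approximation.

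It remains to dispense with the knowledge of $\OPT$. Since (after loop removal) every singleton is independent, at every operation $i$ we have $m_i\le\OPT_i\le\rank\,m_i$ where $m_i=\max_{x\in V_i}f(x)$, and $m_i$ can be maintained under insertions and deletions in $O(\log n)$ oracle-free time per update (a priority queue over the values $f(x)$, each computed once). I would then run in parallel one copy of the thresholded data structure for each guess $\gamma$ in a geometric $(1+\eps)$-grid of $[m_i,\rank\,m_i]$, feeding it the threshold $\tau=\eps\gamma/(2\rank)$; this is $O\!\big(\tfrac1\eps\log\rank\big)$ copies, and the standard bookkeeping that slides the grid as $m_i$ changes (exactly as in \citet{LattanziMNTZ20}) adds no asymptotic overhead. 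The returned solution is the best of these copies: the copy whose $\gamma\in[\OPT_i,(1+\eps)\OPT_i]$ has $\tau\in[\Theta(\eps\OPT_i/\rank),\eps\OPT_i/\rank]$, certifying a $(4+O(\eps))$-approximation by the previous paragraph, while the others can only help. Multiplying the per-copy bound by the number of copies (the priority-queue and best-of overhead being of lower order) gives $O\!\big(\tfrac{\rank^2}{\eps}\log^2\rank\,\log^2(\rank/\eps)\,\log^2 n\big)$, and since $\eps\le 1$ implies $\log\rank\le\log(\rank/\eps)$, this is $O\!\big(\tfrac{\rank^2}{\eps}\log\rank\,\log^2 n\,\log^3\tfrac{\rank}{\eps}\big)$, as claimed.

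The main obstacle is the analysis of \thresholdswapping in the first step --- showing that clipping the weights at $\tau$ costs only an additive $O(\eps)\OPT$ while leaving the swapping charging argument of \cref{thm:swapping} otherwise intact. The delicate point is that weights are measured against the cumulative set $S'$ rather than the current solution $S$, so one must check that the marginal values of the clipped optimum-elements really do telescope into the bound $\rank\tau$ against $\OPT$; this is precisely what the refined argument in \cref{app:swapping} establishes. Everything else --- the counting/potential argument inherited from \cref{sec:appr,sec:all-together} and the dynamic maintenance of the guess grid --- is routine.
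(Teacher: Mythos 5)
Your first two steps (a thresholded variant of \swapping losing only an additive $\eps\,\OPT$, plugged into the level data structure so that chain lengths are bounded by $\log\frac{k}{\eps}$ instead of $\log\Delta$) coincide with what the paper does in \cref{app:Delta} via \thresholdswapping and the extra filter in \LevelConstruct. The gap is in your third step, where you dispense with knowledge of $\OPT$ by maintaining copies only for guesses $\gamma$ in a sliding geometric grid of $[m_i,\rank\,m_i]$ and assert that ``the standard bookkeeping that slides the grid as $m_i$ changes adds no asymptotic overhead.'' That bookkeeping is not standard here, and as described it does not work. If a copy is created only when its guess enters the current window, it must retroactively ingest all previously inserted, still-present elements whose weight exceeds its threshold (otherwise the argument of \cref{sec:appr}, which needs the copy's state to be exactly what \thresholdswapping would produce on an ordering of \emph{all} relevant surviving elements, breaks); an oblivious adversary that first inserts many moderate-value elements and then repeatedly inserts/deletes a single huge element (or simply deletes the current maximum) forces such creations over and over, each costing up to $\Omega(|V_i|)$ oracle calls, destroying the amortized bound. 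If instead you keep every copy ever created and keep feeding it, then a copy with a low threshold can later receive elements of arbitrarily larger value, so both the number of copies an element enters and the length of a swap chain inside a copy are no longer $\poly(\rank/\eps)$ but grow with the value range --- i.e.\ the dependence on $\Delta$ you were trying to remove reappears.

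The paper avoids this tension with a \emph{two-sided} value filter at insertion time (line \ref{line:insertion-range} of the modified \Insertion): the copy for threshold $\tau=(1+\eps)^j$ only ever receives elements with $f(e)\in[\frac{\eps}{k}\tau,(1+\eps)\tau)$. This pins every copy's value range, so chains have length $O(\log\frac{k}{\eps})$, ensures each element enters only $O(\frac1\eps\log\frac{k}{\eps})$ copies regardless of how the maximum evolves (\cref{obs:num-copies}), and lets copies exist lazily for all $j$ with no rebuilding or retroactive ingestion; the running time is then charged per element via $\sum_j n_j\le \frac n\eps\log\frac k\eps$ rather than per copy. Crucially, the upper cutoff is harmless for the approximation because at the correct guess $\tau\in[\OPT_i/(1+\eps),\OPT_i]$ no surviving element can have $f(e)>(1+\eps)\tau$. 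To repair your argument you would need to either adopt this two-sided filtering or supply a genuinely new analysis of the sliding grid that accounts for copy (re)construction; the remaining parts of your write-up (the \thresholdswapping guarantee and the transfer of \cref{sec:appr,sec:all-together} with $\log\Delta$ replaced by $\log\frac{k}{\eps}$) match the paper.
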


\section{Conclusions and Future Directions} In this paper we design the first efficient algorithm for fully-dynamic submodular maximization with matroid constraint. An interesting open question stems immediately from our result: is it possible to reduce the amortized running to depend only poly-logarithmically in $k$ (currently it is $\tilde{O}(k^2)$)?

In this paper we focus on the crucial worst-case paradigm, constructing an algorithm whose guarantees are robust to any (oblivious) adversary that generates the stream of insertions and deletions. An interesting open problem of research is to study beyond-worst case analysis, when it is natural to assume some ``non-adversarial'' structure on the stream, similarly to what has been done, for instance, in the random-order arrival model for insertion-only streams \citep{Norouzi-FardTMZ18,LiuRVZ21,FeldmanLNSZ22}.

\section*{Acknowledgements}
The work of Federico Fusco is partially supported by ERC Advanced Grant 788893 AMDROMA “Algorithmic and Mechanism Design Research in Online Markets”, PNRR MUR project PE0000013-FAIR”, and PNRR MUR project  IR0000013-SoBigData.it. Part
of this work was done while Federico was an intern at Google
Research, hosted by Paul D\"utting.

\bibliographystyle{plainnat}
\bibliography{references}

\clearpage
\appendix

\section{\swapping algorithm}
\label{app:swapping}

    In this section we formally prove that the version of \swapping presented in \Cref{sec:prel} maintains the approximation guarantees of the original algorithm by \citet{ChakrabartiK15}. Actually, we prove a more general result that is instrumental to prove, in \Cref{app:Delta}, that it is possible to remove the dependence in $\Delta$ from the amortized running time. Consider the following version of \swapping, that we call \thresholdswapping, where all the elements with weight below a given threshold are simply ignored by the algorithm. The details are given in the pseudocode. Note that when the threshold is set to $0$, \thresholdswapping and  \swapping coincides. We prove now the following Theorem, that clearly implies \Cref{thm:swapping} by setting $\varepsilon$ and $\tau$ to $0$.

    \begin{algorithm}[H]
	\caption{\thresholdswapping} \label{alg:swapping_extended}
	\begin{algorithmic}[1]
		\STATE \textbf{Input:} rank $k$ of the matroid, precision parameter $\eps$ and threshold $\tau$
		\STATE \textbf{Environment:} stream $\pi$ of elements, function $f$, matroid $\cM$
		\STATE $S \gets \emptyset$, $S' \gets \emptyset$
		\FOR{each new arriving element $e$ from $\pi$}
		    \STATE $w(e) \gets f(e\mid S')$
		    \IF{$w(e) < \frac{\eps}{k} \tau$} \label{line:weight_test}
		        \STATE Ignore $e$ and \textbf{continue}
		    \ENDIF
		    \IF{$S + e \in \cM$}
		        \STATE $S \gets S + e$, $S' \gets S' + e$
		    \ELSE 
                \STATE $s_e \gets \argmin\{w(y) \mid y \in S, \ x + S - y \in \cM\}$
                \IF{$2 w(s_e) < w(e)$}
                    \STATE $S \gets S - s_e + e$, $S' \gets e + S'$
                \ENDIF
		    \ENDIF
		\ENDFOR
		\STATE \textbf{Return} $S$
	\end{algorithmic}
    \end{algorithm}
    
    \begin{theorem}
    \label{thm:tau_swapping}
        For any $\eps>0$ and threshold $\tau<\OPT$, it holds that \thresholdswapping yields a $(4 + O(\eps))$-approximation to the optimal offline solution on the stream.
    \end{theorem}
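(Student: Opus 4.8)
The plan is to adapt the classical analysis of \swapping by \citet{ChakrabartiK15} to account for two modifications: the use of $S'$ (the "ghost" set of all elements that were ever in the solution) in computing marginals, and the threshold filter on line \ref{line:weight_test}. Let $S$ be the final solution, let $S'$ be the final ghost set, and fix an optimal base $O$ with $f(O) = \OPT$. The key accounting device is to charge each optimal element $o \in O$ to a corresponding element of $S'$. First I would recall the standard structural fact: since both $O$ and the final $S$ are independent and $S$ is a base of the matroid restricted to the elements the algorithm did not ignore (or can be extended to one), there is an injection $\phi$ from $O$ into $S'$ such that, at the moment $o$ was processed, either $o$ was added/swapped in (and $\phi(o)=o$), or $o$ failed the swap test against some $s_o \in S$ with $2w(s_o) \ge w(o)$, and we set $\phi(o) = s_o$; elements ignored by the threshold get charged differently (see below). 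The matroid exchange property is what makes $\phi$ well-defined as an injection — this mirrors the argument in \Cref{lem:swap-speed-up} about which elements can be swapped.

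Next I would bound $f(O \cup S') - f(S')$ by submodularity: $f(O \cup S') - f(S') \le \sum_{o \in O} f(o \mid S') \le \sum_{o \in O} w_o$, where $w_o$ denotes the weight $f(o\mid S'_o)$ that $o$ had when processed (using submodularity, since $S'_o \subseteq S'$). Split $O$ into $O_{\text{ign}}$ (elements ignored by the threshold test) and $O_{\text{rest}}$. For $o \in O_{\text{ign}}$ we have $w_o < \frac{\eps}{k}\tau < \frac{\eps}{k}\OPT$, so these contribute at most $\eps \cdot \OPT$ in total since $|O| \le k$. For $o \in O_{\text{rest}}$, the charging gives $w_o \le 2 w(s_o)$ if $o$ was not itself inserted, and $w_o$ equals the insertion weight of $\phi(o)=o$ otherwise; in both cases $w_o \le 2 w(\phi(o))$ where $w(\phi(o))$ is the weight assigned to $\phi(o)$ at \emph{its} insertion time. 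Then I would invoke the telescoping identity for $S'$: the elements of $S'$, listed in insertion order, satisfy $\sum_{x \in S'} w(x) = \sum_{x\in S'} f(x \mid S'_{<x}) = f(S')$ by the chain rule, and similarly $f(S) \ge f(S') - (\text{value of swapped-out elements})$; the $2$-factor swap rule ensures the swapped-out mass is geometrically dominated, giving $f(S') \le 2 f(S)$ — this is exactly the place the modified-weight version needs the argument from \Cref{app:swapping} that the excerpt defers to. Finally, since $\phi$ is an injection into $S'$, $\sum_{o\in O_{\text{rest}}} w_o \le 2\sum_{x \in S'} w(x) = 2 f(S')$.

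Combining: $\OPT = f(O) \le f(O \cup S') \le f(S') + \sum_o w_o \le f(S') + 2 f(S') + \eps\OPT \le 3 f(S') + \eps \OPT$, and then $f(S') \le 2 f(S)$ yields $\OPT \le 6 f(S) + \eps \OPT$, hence $(6 + O(\eps))$-approximation — so I would need to sharpen the bookkeeping (e.g., not double-count $f(S')$ both as the base value and inside the charge, by instead writing $f(O\cup S') \le f(S') + \sum_{o} w_o$ and noting $S \subseteq S'$ so $f(S') \ge f(S)$, then balancing the swapped-out geometric series more carefully against the factor from $\phi$) to land the claimed $(4+O(\eps))$. The main obstacle is precisely this constant-tightening step: making the interaction between the "ghost set" marginals, the factor-$2$ swap threshold, and the injection $\phi$ yield $4$ rather than a looser constant, which is the technical heart of why the paper isolates it in \Cref{app:swapping}; everything else (the threshold contributing an additive $\eps\OPT$, the matroid exchange structure of $\phi$) is routine.
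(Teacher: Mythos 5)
There is a genuine gap: as you yourself note, your accounting only lands at $(6+O(\eps))$, and the step you defer (``sharpen the bookkeeping'') is exactly the content of the theorem, so it cannot be waved off. The paper closes it with two changes to your charging scheme. First, only the elements of $\OPT\setminus(S'\cup F)$ are charged through an injection (here $F$ is the set of threshold-filtered elements); the elements of $\OPT\cap S'$ need no charge at all, because in the decomposition $f(\OPT)\le f(S')+\sum_{e\in\OPT\setminus S'}f(e\mid S')$ they contribute zero marginal, and $f(S')$ itself is bounded by $w(S')=w(S)+w(K)\le 2w(S)\le 2f(S)$ using the three properties of \Cref{cl:wproperties}. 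Second --- and this is the decisive point --- the injection maps into the \emph{final solution} $S$, not into $S'$: each charged optimum element is routed along its chain of swaps to a sink of a swap DAG, and since every swap along the path at least doubles the weight (the only non-increasing step being the first, where $w(u)\le 2w(s_u)$ by the rejection rule), one gets $w(u)\le 2\,w(\psi(u))$ with $\psi(u)\in S$ (\Cref{lem:graph} and \Cref{lem:graph_application}). Hence the charged mass is at most $2w(S)\le 2f(S)$ rather than your $2f(S')\le 4f(S)$, and the total becomes $f(\OPT)\le 2f(S)+2f(S)+\eps\OPT$, i.e.\ the claimed $(4+O(\eps))$.

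A secondary but real issue is your claim that the matroid exchange property makes $\phi$ injective when $\phi(o)$ is simply the element $s_o$ that $o$ failed the swap test against. That is not automatic: several optimum elements can fail against the same minimum-weight swap candidate, and that candidate may itself be swapped out later, so a direct charge to the immediate ``blocker'' is neither injective nor guaranteed to end in $S$. Establishing an injection with the factor-$2$ weight relation is precisely what the DAG lemma (\Cref{lem:graph}) provides, by following chains of swaps to surviving sinks; your proposal does not supply this construction, and without it both the injectivity and the constant are unsupported.
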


    The proof of \Cref{thm:tau_swapping} follows a similar analysis to the one of the original algorithm in \citet{ChakrabartiK15}; we precede it here with some Lemmata. 
    \begin{lemma}
    \label{cl:wproperties}
        Let $K=S'\setminus S$ be the set of elements that were in the solution and were later swapped out and extend by linearity the function $w$ to sets. Then the following three properties hold true:
    \begin{itemize}
        \item[$(i)$] $\ w(K) \le w(S)$
        \item[$(ii)$] $w(S) \le f(S)$
        \item[$(iii)$] $f(S') = w(S')$
    \end{itemize}
    \end{lemma}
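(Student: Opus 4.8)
The plan is to prove all three items by telescoping over the elements of $S'$, listed in the order in which they entered $S'$; this order is well defined because $S'$ only grows during the run of \thresholdswapping. Write $S' = \{e_1, \dots, e_m\}$ in this order, and let $S'_{<j} = \{e_1, \dots, e_{j-1}\}$ be the content of $S'$ just before $e_j$ was inserted, so that $w(e_j) = \marginal{e_j}{S'_{<j}}$ by construction. Item $(iii)$ is then immediate from additivity of $w$:
\[
w(S') = \sum_{j=1}^m w(e_j) = \sum_{j=1}^m \bigl( f(\{e_1, \dots, e_j\}) - f(\{e_1, \dots, e_{j-1}\}) \bigr) = f(S') - f(\emptyset) = f(S'),
\]
using that $f$ is normalized.

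For item $(ii)$ I would list the final solution as $S = \{g_1, \dots, g_r\}$, again in order of insertion into $S'$. Since each of $g_1, \dots, g_{j-1}$ entered $S'$ before $g_j$ and $S'$ never shrinks, the context $S'_{<g_j}$ at the time $g_j$ was inserted contains $\{g_1, \dots, g_{j-1}\}$, so submodularity gives $w(g_j) = \marginal{g_j}{S'_{<g_j}} \le \marginal{g_j}{\{g_1, \dots, g_{j-1}\}}$; summing over $j$ and telescoping yields $w(S) \le f(\{g_1, \dots, g_r\}) = f(S)$.

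Item $(i)$ is the substantive one, and I expect it to be the main obstacle. I would charge each swapped-out element to a surviving element along the chain of swaps that removed it. For $y \in K$, let $\sigma(y)$ be the element whose arrival caused $y$ to be swapped out of $S$; the swap test of \thresholdswapping guarantees $w(\sigma(y)) > 2\,w(y)$. Two structural facts drive the argument: first, $\sigma$ is injective, because a newly arriving element performs at most one swap and hence is the $\sigma$-image of at most one element of $K$; second, $\sigma(y)$ arrives strictly after $y$ enters $S$, so iterating $\sigma$ moves forward in time and cannot cycle (in any case the $w$-values strictly increase along the iteration). Hence every chain $y, \sigma(y), \sigma^2(y), \dots$ is finite and ends at an element of $S' = S \cup K$ that was never swapped out, i.e.\ an element of $S$. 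Injectivity of $\sigma$ makes the preimage structure a disjoint union of paths, so for each $z \in S$ the elements of $K$ whose chain ends at $z$ form a single path $y_1 \to y_2 \to \cdots \to y_t \to z$ with $w(y_{t-i}) < 2^{-(i+1)} w(z)$. Summing this geometric series gives $\sum_i w(y_i) < w(z)$, and summing over $z \in S$ gives $w(K) < w(S)$, which is $(i)$ (with room to spare).

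The delicate points, all in item $(i)$, are checking that $\sigma$ is well defined and injective — each element leaves $S$ through exactly one swap, and each arriving element triggers at most one swap, so has at most one $\sigma$-preimage — and observing that the low-weight elements discarded by \thresholdswapping on Line~\ref{line:weight_test} play no role, since they never enter $S'$ and hence affect neither $K$ nor any stored weight $w(\cdot)$. Once the chain structure is set up, the remaining geometric-series estimate is routine.
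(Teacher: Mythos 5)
Your proof is correct and follows essentially the same route as the paper's: items $(ii)$ and $(iii)$ are the same telescoping/submodularity arguments, and item $(i)$ uses the same chain-of-swaps structure, only bounding each chain by a geometric series (each weight less than half its successor's) where the paper telescopes the differences $w(e)-w(s_e)$ along the chain; both bookkeepings yield $w(K)\le w(S)$.
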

    \begin{proof}
        Crucially, the weight function $w$ is linear and once an element enters $S$, its weight is fixed forever as the marginal value it contributed when entering $S$ but with respect to $S'$. During the run of the algorithm, every time an element $s_e$ is removed from $S$, the weight of $S$ increases by $w(e) - w(s_e)$ by its replacement with some element $e$. Moreover, $w(s_e) \le w(e) - w(s_e)$ for every element $s_e \in K$ since $2\cdot w(s_e) \leq w(e)$. Summing up over all elements in $K$, it holds that 
        \[
              w(K) = \sum_{s_e \in K}w(s_e) \leq \sum_{s_e \in K} \left[w(e) - w(s_e)\right] \le w(S),
        \]
        where the last inequality follows from a telescoping argument (for each chain of swaps we only retain with the positive sign the final elements that remained in $S$). This establishes $(i)$.

        Consider now the second inequality $(ii)$. We denote with $S'_e$ the version of set $S'$ when element $e$ was added, and we denote with $S_e = S'_e \cap S.$ We have the following:
        \[
            f(S) = \sum_{e \in S} f(e|S_e) \ge \sum_{e \in S} f(e|S'_e) = \sum_{e \in S} w(e) = w(S),
        \]
        where the crucial inequality is due to submodularity.
        Finally, the last point $(iii)$ follows by the definition of $w$.
    \end{proof}
    
    In the analysis of the approximation guarantees of \thresholdswapping we need to somehow account for all the elements in the optimum that were initially added to the solution by the algorithm but then later swapped out. To analyze this ``chain of swaps'' we need a useful combinatorial lemma: Lemma 13 of \citet{FeldmanK018} (also Lemma 30 of \citet{ChekuriGQ15}). It concerns directed acyclic graphs (DAGs) where the nodes are also elements of a matroid. Under some assumption, it guarantees the existence of an injective mapping between elements in an independent set and of the sinks of the DAG. As a convention, we denote with $\delta^+(u)$ the out-neighborhood of any node $u$ and we say that an independent set $T$ spans an element $x$ if $T + x \notin \cM$.  

    \begin{lemma}
    \label{lem:graph}
        Consider an arbitrary directed acyclic graph $G = (V, E)$ whose vertices are elements of some matroid $\cM$. If every non-sink vertex $u$ of $G$ is spanned by $\delta^+(u)$ in $\cM$, then for every set $S$ of vertices of $G$ which is independent in $\cM$ there must exist an injective function $\psi$ such that, for every vertex $u \in S$, $\psi(u)$ is a sink of $G$ which is reachable from $u$.
    \end{lemma}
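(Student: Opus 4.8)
The plan is to derive $\psi$ from Hall's marriage theorem applied to a ``reachability'' bipartite graph, and to verify Hall's condition by a one-line matroid-rank estimate that uses the spanning hypothesis. Write $\operatorname{span}(\cdot)$ for the closure (span) operator of $\cM$ and $r(\cdot)$ for its rank function, and let reachability in $G$ be \emph{reflexive}, so that every vertex (in particular every sink) is reachable from itself. Let $Q$ be the set of sinks of $G$ and build a bipartite graph $H$ with classes $S$ and $Q$, joining $u\in S$ to $t\in Q$ exactly when $t$ is reachable from $u$ in $G$. By the definition of the edge set of $H$, a matching of $H$ saturating $S$ is precisely an injective map $\psi\colon S\to Q$ with $\psi(u)$ a sink reachable from $u$ for every $u\in S$ (if $u$ is itself a sink, $\psi(u)=u$ is permitted). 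So it suffices to check Hall's condition: $|N_H(S')|\ge|S'|$ for every $S'\subseteq S$.

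Fix $S'\subseteq S$ and let $R=R(S')$ denote the set of all vertices of $G$ reachable from some vertex of $S'$. Then $N_H(S')=Q\cap R$, i.e.\ the set $T:=Q\cap R$ of sinks lying in $R$, so the goal is $|T|\ge|S'|$. The structural claim I will establish is $R\subseteq\operatorname{span}(T)$. Granting it: $S'\subseteq R$ is independent (it is a subset of the independent set $S$), hence $|S'|\le r(R)\le r(\operatorname{span}(T))=r(T)\le|T|$, which is Hall's condition; then $H$ has a matching saturating $S$ and we are done.

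It remains to prove $R\subseteq\operatorname{span}(T)$. Since $G$ is a finite DAG, every vertex $v$ has a well-defined finite height $h(v)$, the length of a longest directed path leaving $v$; the sinks are exactly the vertices of height $0$, and every vertex has a directed path to a sink. I argue by induction on $h(v)$ that every $v\in R$ lies in $\operatorname{span}(T)$. If $h(v)=0$ then $v$ is a sink in $R$, so $v\in T\subseteq\operatorname{span}(T)$. If $h(v)\ge 1$ then $v$ is not a sink, so by the hypothesis of the lemma $v$ is spanned by $\delta^+(v)$, i.e.\ $v\in\operatorname{span}(\delta^+(v))$. Every $w\in\delta^+(v)$ is reachable from $S'$ (prolong a path reaching $v$ by the edge $v\to w$), hence $w\in R$, and $h(w)<h(v)$, so by the inductive hypothesis $w\in\operatorname{span}(T)$. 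Therefore $\delta^+(v)\subseteq\operatorname{span}(T)$, and by monotonicity and idempotence of the closure operator $v\in\operatorname{span}(\delta^+(v))\subseteq\operatorname{span}(\operatorname{span}(T))=\operatorname{span}(T)$. This completes the induction, and hence the proof.

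No single step here is deep --- the argument is a deficiency-form Hall argument glued to a monotone rank inequality --- so the main thing to be careful about is the interplay of conventions: reading ``$u$ is spanned by $\delta^+(u)$'' correctly as $u\in\operatorname{span}(\delta^+(u))$ even when $\delta^+(u)$ is itself dependent, and using reflexive reachability so that a sink in $S$ can be matched to itself, consistently with the statement. Acyclicity is used only to make the height function well-founded; it is exactly what prevents the induction in the last paragraph from looping.
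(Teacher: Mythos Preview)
Your proof is correct. Note, however, that the paper does not actually prove this lemma: it is quoted as Lemma~13 of \citet{FeldmanK018} (equivalently Lemma~30 of \citet{ChekuriGQ15}) and used as a black box in the analysis of \thresholdswapping, so there is no in-paper proof to compare against. Your argument---check Hall's condition on the bipartite reachability graph between $S$ and the sinks, via the chain $|S'|\le r(R)\le r(\operatorname{span}(T))=r(T)\le|T|$ after establishing $R\subseteq\operatorname{span}(T)$ by induction on height in the DAG---is the natural route and matches how this lemma is typically proved in the cited sources. The one point of care, which you already flagged, is that the paper's surrounding text defines ``$T$ spans $x$'' only for independent $T$, whereas $\delta^+(u)$ need not be independent; your reading $u\in\operatorname{span}(\delta^+(u))$ is the correct general interpretation and is what the original references intend.
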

    
    We clarify now how we intend to use the previous result to our problem, similarly to what is done in \citet{DuettingFLNZ22arXiv}. 
    \begin{lemma}
    \label{lem:graph_application}
        Let $\OPT$ be the optimal offline solution and denote with $F$ the set of elements that failed the threshold test in line \ref{line:weight_test} of \thresholdswapping. Then it holds that $w(\OPT \setminus (S' \cup F)) \le 2 w(S)$.
    \end{lemma}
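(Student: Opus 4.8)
The plan is to produce an injective map $\psi$ from $O' := \OPT \setminus (S' \cup F)$ into the final solution $S$ with $w(o) \le 2\,w(\psi(o))$ for every $o \in O'$; summing over $O'$ and using that $w \ge 0$ (monotonicity) together with injectivity then gives $w(O') \le 2\sum_{o \in O'} w(\psi(o)) \le 2\,w(S)$, which is the claim. The map will be extracted from \Cref{lem:graph}, applied to a directed acyclic graph on vertex set $S' \cup O'$.

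The first step is to record the local ``swap'' facts. Each $o \in O'$ passed the threshold test ($o \notin F$) but was not added, so at $o$'s arrival the current solution $S_o$ satisfied $S_o + o \notin \cM$ and $w(o) \le 2\,w(s_o)$, where $s_o$ is the minimum-weight element of the fundamental circuit $C(o)$ of $o$ with respect to $S_o$; here I use the standard matroid fact that $\{y \in S_o : S_o - y + o \in \cM\} = C(o) \setminus \{o\}$, so that in fact $w(o) \le 2\,w(y)$ for \emph{every} $y \in C(o) \setminus \{o\}$. Symmetrically, whenever some $a \in K := S' \setminus S$ was evicted by an element $b$ (this happens exactly once for each $a \in K$), $a$ is the minimum-weight element of the fundamental circuit $C(b)$ of $b$ with respect to the solution just before that swap, and $2\,w(a) < w(b)$; hence $w(a) \le w(y)$ for every $y \in C(b) \setminus \{a\}$ (for $y = b$ use $w(b) > 2\,w(a)$, otherwise use minimality of $a$ in $C(b) \setminus \{b\}$).

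Next I build $G$: from each $o \in O'$ draw edges to all of $C(o)\setminus\{o\}$, and from each $a \in K$ draw edges to all of $C(b)\setminus\{a\}$, where $b$ is the element that evicted $a$. Every non-sink vertex (necessarily an element of $O' \cup K$) then has an out-neighborhood that is a fundamental circuit with one of its own elements removed, hence a set that spans the vertex — this is the hypothesis of \Cref{lem:graph}. The sinks are exactly $S$: the vertex set partitions as $S \sqcup K \sqcup O'$ (recall $O' \cap S' = \emptyset$ by definition of $O'$), elements of $S$ have no out-edges, while every $O'$- and $K$-vertex has a nonempty out-neighborhood (loops never enter a solution and $\OPT$ contains no loop). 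Acyclicity follows by ordering vertices by the time at which they leave a solution — arrival time for an $O'$-vertex, eviction time for a $K$-vertex, $+\infty$ for a vertex of $S$ — since every edge runs from a vertex lying in some solution at a time $t$ to a vertex that stays in the solution strictly past $t$, so this timestamp strictly increases along every edge.

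Since $O' \subseteq \OPT$ is independent, \Cref{lem:graph} yields an injective $\psi : O' \to S$ with $\psi(o)$ reachable from $o$. Any directed path leaving $o$ first steps to some $y_1 \in C(o)\setminus\{o\}$ (so $w(o) \le 2\,w(y_1)$) and thereafter stays within $K$, traversing weight-nondecreasing edges until it reaches $S$; therefore $w(o) \le 2\,w(\psi(o))$, and the bound follows as described. I expect the graph design to be the main obstacle: the out-neighborhoods must simultaneously be \emph{spanning sets} (to invoke \Cref{lem:graph}) and be \emph{weight-comparable to their tail} (for the final summation), and the observation that reconciles the two is that the algorithm's set of admissible swaps coincides with a fundamental circuit minus the incoming element, so that picking the minimum-weight swap makes the factor-$2$ weight inequality hold against every out-neighbor rather than just the chosen one; verifying acyclicity and that the sink set is precisely $S$ are the remaining points requiring care.
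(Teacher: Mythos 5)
Your proof is correct and follows essentially the same route as the paper: the same DAG (discard edges from $\OPT$-elements to their fundamental circuit, eviction edges from swapped-out elements to the evicting circuit), acyclicity via the time ordering, \Cref{lem:graph} to get the injection into $S$, and the observation that only the first edge of each path loses a factor $2$ while eviction edges are weight-nondecreasing. Restricting the vertex set to $S' \cup (\OPT \setminus (S' \cup F))$ rather than all elements passing the threshold test is an immaterial difference, and your explicit verification of the spanning hypothesis and of the sink set being $S$ only adds detail the paper leaves implicit.
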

    \begin{proof}
        Consider the following procedure to construct a DAG whose nodes are given by the elements of the stream that passed the weight test in line \ref{line:weight_test}. When an element $x$ arrives, call $C$ the circuit\footnote{A circuit is a dependent set that is minimal with respect to inclusion} in $S + x$ and $y$ the element in $C - x$ with smaller weight. If $y$ is swapped with $x$, then we add directed edges from $y$ to all the elements in $C - y$, if $x$ is not added, then add directed edges from $x$ to each element in $C - x$. If $x$ is added without any swap, then its out-neighborhood is empty. 
    Every edge in this graph points from a vertex dropped or swapped out at some time to a vertex that is either never deleted or removed at some time in the future. This time component makes the underlying graph a DAG.
    We now apply \Cref{lem:graph} on $G$, then there exists an injective function $\psi$ that associates each element in $\OPT \setminus (S'\cup F)$ to an element in $S$ such that $w(u) \le 2 \cdot w(\psi(u))$ for all $u \in \OPT \setminus (S'\cup F)$. 
    To see this, consider that in each $u$-$\psi(u)$ path there is at most one swapping where the weight does not increase, and it has to be the first swap if the new element $u$ of the stream was not added to the solution because of some node in the solution whose weight was possibly smaller, but no more than a factor $2$ smaller. This implies that $w(\OPT \setminus (S'\cup F)) \le 2 \cdot w(S).$
    \end{proof}
    
    We finally have all the ingredients to prove the Theorem.

    \begin{proof}[Proof of \Cref{thm:tau_swapping}]
        The proof of the Theorem is just a simple chain of inequalities that uses the Lemmata:
        \begin{align*}
            f(\OPT) &\le f(\OPT \cup S')\\
            &= f(S') + f(\OPT |S')\\
            &\le w(S) + w(K) + \sum_{e \in \OPT\setminus S'} f(e|S') \tag*{(Property $(iii)$ and submodularity)}\\
            &\le 2 \cdot f(S) + \sum_{e \in \OPT\setminus (S' \cup F)} w(e) + \sum_{e \in \OPT\cap F} w(e) \tag*{(Properties $(ii)+(i)$ and submodularity)}\\
            &\le 4 \cdot f(S) + \sum_{e \in \OPT\cap F} w(e) \tag*{(\Cref{lem:graph_application} and Property $(ii)$)}\\
            &\le 4 \cdot f(S) + \eps \OPT. \tag*{(Assumption on $\tau$)}
        \end{align*}
    \end{proof}

    \begin{algorithm}[H]
    	\caption{$\Insertion(e)$ corresponding to threshold $\tau = (1+\eps)^j$, for some integer $j$} \label{alg:insertion-fast}
    	\begin{algorithmic}[1]
    	    \IF{$(1+\eps)\tau  > f(e) \geq \frac{\eps}{k} \tau$} \label{line:insertion-range}
    		\STATE $B_\ell \gets B_\ell + e \quad \forall \, 0 \leq \ell \leq L$ 
    		\IF{there exists an index $\ell$ such that $|B_\ell| \ge \frac{n}{2^\ell}$}
    		    \STATE Let $\ellstar$ be such $\ell$ with lowest value 
        		\STATE Call $\LevelConstruct(\ellstar)$ 
    		\ENDIF
    		\ENDIF
    	\end{algorithmic}
    \end{algorithm}      
\section[Removing the dependence on Delta]{Removing the dependence on $\Delta$}
\label{app:Delta}

    In this section, we show how to remove the dependence in $\Delta$ of the amortized running time. Similarly to \citet{LattanziMNTZ20}, for any fixed choice of a precision parameter $\varepsilon>0$, we run multiple instances of our algorithm in parallel, where each instance is parametrized with a threshold $\tau = (1 + \eps)^j$ for various integer values of $j$. In each one of these instances, all the elements with weight smaller than $\varepsilon \cdot \tau/k$ are simply ignored. We return the best solution among these parallel runs after each operation. 
    
    There are two  modifications to the algorithm presented in the main body that we implement. One in the \Insertion routine, and one in \LevelConstruct. We present here the pseudocodes and describe the changes.

    The \Insertion routine is modified in a simple way: when an element is inserted, it is actually considered for insertion only if its value is within a certain range. We refer to the pseudocode for the details. Given this modification and the geometric construction of the thresholds $\tau =(1+\eps)^j$, we have an immediate bound on the number of parallel rounds of the algorithm that any element is actually inserted into:
    \begin{observation} \label{obs:num-copies}
        Any element $e$ is inserted in $O(\frac{1}{\eps} \log \frac k \eps )$ copies of the algorithm.
    \end{observation}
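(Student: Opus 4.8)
The plan is a one-line counting argument over the geometrically spaced thresholds, so the statement should follow without real difficulty. Fix an element $e$ and recall from the modified \Insertion routine (\cref{alg:insertion-fast}) that $e$ is fed into the copy with threshold $\tau = (1+\eps)^j$ exactly when $(1+\eps)\tau > f(e) \ge \frac\eps k \tau$, i.e. when the integer $j$ satisfies $\frac\eps k (1+\eps)^j \le f(e) < (1+\eps)^{j+1}$. First I would turn this into a two-sided bound on $j$: the left inequality is equivalent to $j \le \log_{1+\eps}\!\big(\tfrac k\eps f(e)\big)$ and the right one to $j > \log_{1+\eps} f(e) - 1$. Hence the admissible values of $j$ form a (half-open) real interval of length $\log_{1+\eps}\!\big(\tfrac k\eps f(e)\big) - \log_{1+\eps} f(e) + 1 = \log_{1+\eps}\!\big(\tfrac k\eps\big) + 1$, independent of $f(e)$.

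Next I would count the integers in this interval: since consecutive thresholds differ by exactly one unit in $j$, there are at most $\log_{1+\eps}(k/\eps) + 2$ of them. It then remains to convert $\log_{1+\eps}$ into a base-independent bound. For $\eps \in (0,1]$ one has $\ln(1+\eps) \ge \eps/2$, so $\log_{1+\eps}(k/\eps) = \frac{\ln(k/\eps)}{\ln(1+\eps)} \le \frac 2\eps \ln\frac k\eps$; for $\eps \ge 1$ the quantity $\log_{1+\eps}(k/\eps)$ is $O(\log k)$, which is again $O(\tfrac1\eps \log\tfrac k\eps)$. Either way, the number of copies into which $e$ is ever inserted is $O\!\big(\tfrac1\eps \log \tfrac k\eps\big)$, which is exactly the claim.

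The argument has no genuine obstacle; the only points to keep an eye on are (i) that the window is counted as an additive, not multiplicative, $+O(1)$ beyond its length in $j$ — immediate from the unit spacing of consecutive thresholds — and (ii) the standard small-$\eps$ conversion $\log_{1+\eps}(x) = \Theta(\tfrac1\eps \log x)$, which uses nothing more than $\ln(1+\eps) = \Theta(\eps)$ on $(0,1]$.
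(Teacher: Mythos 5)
Your argument is correct and is exactly the counting the paper has in mind: the insertion filter in \cref{alg:insertion-fast} confines the admissible thresholds $(1+\eps)^j$ to a multiplicative window of width $O(k/\eps)$ around $f(e)$, and the geometric spacing plus $\ln(1+\eps)=\Theta(\eps)$ gives $O(\frac1\eps\log\frac k\eps)$ integers $j$ in that window (the paper states this as immediate without writing out the computation). No gaps worth noting.
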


    We move our attention towards \LevelConstruct. There, we modify the repeat loop to filter out all the elements whose marginal contribution falls, at any point, below $\frac{\eps}{k} \tau$. More precisely, we add Line \ref{line:new-fast} in the following pseudocode. 
    
    \begin{algorithm}
	\caption{$\LevelConstruct(\ell)$ corresponding to threshold $\tau = (1+\eps)^j$, for some integer $j$} \label{alg:level-construct-fast}
	\begin{algorithmic}[1]
		\STATE $A_{\ell} \gets A_{\ell-1} \cup B_{\ell-1}$  \label{line:fastlevel-first-set-a}
		\STATE $B_\ell \gets \emptyset$ \label{line:fastlevel-b-empty}
		\STATE $S_{\ell} \gets S_{\ell-1}$ \label{line:fastlevel-s_l}
		\STATE $S'_{\ell} \gets S'_{\ell-1}$ \label{line:fastlevel-s'_l}
        \REPEAT	\label{line:fastlevel-from}
        \FOR{any element $e \in A_\ell$}\label{line:fastbegin-swap-loop}
            \STATE $w(e) \gets f(e \mid S'_{\ell})$ \label{line:fastlevel-weight}
            \STATE $s_e \gets \argmin\{w(y) \mid y \in S_{\ell} \wedge S_{\ell} - y + e \in \cM\}$\label{line:fastfind-swap}
        \ENDFOR
		\STATE $A_{\ell} \gets \{ e \in A_{\ell} \mid f(e \mid S_\ell) \geq \frac{\eps }{k} \tau \}$ \label{line:new-fast}
        \STATE $E_{\ell} \gets \{ e \in A_{\ell} \mid S_{\ell} + e \in \cM\}$ \label{line:fastE_l}
		\STATE $F_{\ell} \gets \{ e \in A_\ell \setminus E_{\ell} \mid w(e) > 2 \, w(s_e) \}$ \label{line:fastF_l}
		\STATE $A_{\ell} \gets E_{\ell} \cup F_{\ell}$ \label{line:fastA_l_filter}
        \label{line:fastlevel-set-a-loop}%
        \IF {$|A_{\ell}| \ge \frac{n}{2^\ell}$} \label{line:fastlevel-second-check}
        	\STATE Pop $e$ from $A_{\ell}$ uniformly at random \label{line:fastlevel-sample}
            \STATE $S_{\ell} \gets S_{\ell} + e - s_e$ 
            \label{line:fastadd_S_l}
            \STATE $S'_{\ell} \gets S'_{\ell} + e$\label{line:fastadd_S'_l}
        \ENDIF
        \UNTIL {$|A_{\ell}| < \frac{n}{2^\ell}$ } \label{line:fastlevel-until}
        \STATE \textbf{if} $\ell<L$, call $\LevelConstruct(\ell+1)$. \label{line:fastlevel-next-level}
    \end{algorithmic}
\end{algorithm}

    \mainresultcoro*
    
    \begin{proof}
        Let us start by the running time analysis. The crucial place of the analysis in the main body where $\Delta$ appears is when we use $\log \Delta$ to bound the length of any chain of swaps in any repeat loop of \LevelConstruct (\cref{lemma:level-running-time} and \cref{lemma:level-num-call-delete}). This is because each swap increase the weight by at least a factor $2$. 
        Let $n_j$ denote the number of elements inserted to the $j^{th}$ copy of the algorithm, corresponding to the threshold $\tau = (1+\eps)^j$. Our claim is that line~\ref{line:new-fast} enables us to bound the length of any chain of swaps by $\log \frac{k}{\eps}$. The argument is as follows: each element inserted into that copy of the algorithm has value (and, by submodularity also weight) at most $(1+\eps)\tau = (1+\eps)^{j+1}$. Conversely, only swaps with weight at least $\frac \eps k \tau = \frac \eps k (1+\eps)^j$ can happen (by the filter in line~\ref{line:new-fast} of \LevelConstruct). Putting these two bounds together, we have the desired upper bound on the length of any chain of swaps. 
        
        The previous argument allows us to replace $O(\log \Delta)$ with $O(\log \frac{k}{\eps})$ in the analysis of the running time of each copy of the algorithm (\cref{thm:main}): there the total running time is then $  O(n_j \cdot \rank^2\log \rank \log^2 \frac \rank \eps \log^2 n_j) $.
        Summing up over all the copies, we have that the total running time (up to multiplicative constants) is
        \begin{align*}
            \sum_{j:n_j>0} n_j \rank^2\log \rank \log^2 \frac \rank \eps \log^2 n_j &\le \left(\rank^2\log \rank \log^2 n \log^2 \frac \rank \eps \right)\sum_{j:n_j>0} n_j \\
            &\le \left( \rank^2\log \rank \log^2 n \log^3 \frac \rank \eps \right)\cdot \frac{n}{\eps}
        \end{align*}
        where in the first inequality inclusion we used the simple bound $n_i \le n$ and the second one follows from \cref{obs:num-copies}. Therefore the amortized running time is
        \[
            O\left(\frac {\rank^2}\eps \log \rank \log^2 n \log^3 \frac \rank \eps\right).
        \]
        It remains to show that our new algorithm does not significantly worsen the approximation algorithm of the algorithm we presented in the main body. Fix any operation $i$, and consider the copy of the algorithm corresponding to a threshold $\tau$ that lies in the interval $[\OPT_i/(1+\eps), \OPT_i]$. We first observe that any element $e \in V_i$ (any element currently part of the instance) cannot have $f(e) > (1+\eps) \tau$, since:
        \[
          f(e) \leq \OPT_i \leq(1+\eps) \tau\,.
        \]
        Therefore, ignoring element $e$ such that $(1+\eps)\tau  > f(e)$ does not affect the approximation guarantee. The elements whose value falls below $\frac \eps k \tau\ge \frac \eps k \OPT_i/(1+\eps)$ (and thus not inserted in this copy of the algorithm due to the filter in line \ref{line:insertion-range} of \Insertion), on the other hand, only cause an extra additive $O(\eps)$ error in the approximation and can be ignored (similarly to what has been used in the proof of \Cref{thm:tau_swapping}). Finally, from \cref{thm:tau_swapping} we know that the filter in line \ref{line:A_l_filter} of     \LevelConstruct only cause in another additive $O(\eps)$ error in the guarantee of \swapping. Using the same argument in \Cref{sec:appr} we can then conclude that our algorithm mutuates the approximation guarantees of \thresholdswapping and thus has a $4+O(\eps)$-approximation guarantee.
    \end{proof}

\section{\lazygreedy and \swapping fails in the dynamic setting}
\label{app:benchmark}

    In this section we show how two well known algorithms for submodular maximization subject to matroid constraints (\lazygreedy and \swapping) cannot be directly applied to the dynamic setting without suffering $\Omega(n)$ worst case update time. 
    
    We define the two fully-dynamic algorithms: \fullyswapping and \fullygreedy. They both maintain the set $V_i$ of elements that were inserted but not deleted in the first $i$ operations and a candidate solution. When an element from the candidate solution gets discarded they both recompute from scratch a feasible solution from $V_i$ using their non-fully-dynamic counterpart: \fullyswapping uses \swapping on some ordering of the elements in $V_i$, while \fullygreedy performs \lazygreedy on $V_i$. The two algorithms differ on how they handle insertions. When an element is inserted, \fullygreedy recomputes the solution from scratches on $V_i$ as the new element may have changed drastically the greedy structure of the instance. On the contrary, \fullyswapping simply processes this new element as \swapping would do; this is because \swapping is a streaming algorithm and thus handles efficiently insertions.

    We construct here an instance where both \fullyswapping and \fullygreedy exhibits an amortized running time that is $\Omega(n).$ Consider a set $V=\{x_1,\dots, x_n\}$ of $n$ elements and an additive function $f$ on it, with $f(x_i) = 3^i$. The stream is simple: the elements are inserted one after the other in increasing order $x_1, x_2, \dots$ and are then deleted one after the other in decreasing order $x_n, x_{n-1},\dots$ We show that both algorithms have a large per-operation running time on this instance. 

    \begin{claim}
        \fullyswapping has worst case amortized running time that is $\Omega(n)$
    \end{claim}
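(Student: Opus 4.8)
The plan is to exhibit a specific deletion in the stream after which \fullyswapping is forced to recompute \swapping from scratch on a linear-size set, and to argue that \swapping on that instance performs $\Omega(n)$ oracle calls. First I would describe the state of the algorithm just before the deletions begin: after all $n$ insertions $x_1,\dots,x_n$ have been processed by \swapping (with $f$ additive and $f(x_i)=3^i$), each new element $x_i$ arriving with the previous ones already in $S$ triggers a swap, because $f(x_i \mid S') = 3^i$ is more than twice the smallest weight currently in the solution (the weights form a geometric progression with ratio $3$, so the running minimum is at most $3^{i-1} < 3^i/2$). Hence after the insertion phase $S$ is a base of rank $1$ — the matroid here is the $1$-uniform matroid, so $|S|=1$ — consisting of $\{x_n\}$, and $S'$ contains all $n$ elements. (I would fix the matroid to be $k$-uniform with $k$ constant, say $k=1$, so that every element is in $S$ at some point and $S'=\{x_1,\dots,x_n\}$.)

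Next I would track the deletion phase. The elements are deleted in decreasing order $x_n, x_{n-1}, \dots$. The first deletion removes $x_n$, which is exactly the single element of the candidate solution $S$, so by definition \fullyswapping recomputes a feasible solution from scratch on $V = \{x_1,\dots,x_{n-1}\}$ by running \swapping on some ordering of these $n-1$ elements. Running \swapping on an arbitrary ordering of a set of size $n-1$ costs at least $n-1$ value-oracle calls just to compute the weights $w(e)=f(e\mid S')$ of each arriving element, plus the independence checks — so this single deletion costs $\Omega(n)$. The key point is that this happens \emph{every time} the top surviving element is deleted: after the recomputation, the new solution is (for the $k=1$ case) $\{x_{n-1}\}$, and the very next deletion removes $x_{n-1}$, again hitting the candidate solution and again forcing a from-scratch recomputation on $\{x_1,\dots,x_{n-2}\}$, and so on. Thus a constant fraction of the $n$ deletions each individually cost $\Omega(n)$, giving total running time $\Omega(n^2)$ over $2n$ operations, i.e., amortized (indeed worst-case per-operation) running time $\Omega(n)$.

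The one subtlety — and the step I expect to require the most care — is verifying that \swapping's recomputation on the surviving set really does cost $\Omega(n)$ and cannot somehow be shortcut, and that the chain of forced recomputations genuinely persists. For the cost lower bound it suffices to note that \swapping processes its input ordering element by element and must evaluate $w(e)$ for each, so any run on an $m$-element set makes at least $m$ value-oracle calls regardless of the ordering; this is immediate from the pseudocode of \Cref{alg:swapping}. For the persistence of the chain, the relevant fact is that after \swapping finishes on $\{x_1,\dots,x_j\}$ the solution is $\{x_j\}$ (the unique largest element, since the geometric ratio $3$ forces every later element to swap out every earlier one for the $1$-uniform matroid), so the next deletion $x_j$ always lands in the candidate solution. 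I would present this formally by induction on the deletion index: I would state the invariant that just before the $t$-th deletion the candidate solution equals $\{x_{n-t+1}\}$, show the base case from the insertion-phase analysis, and show the inductive step from the recomputation behavior of \swapping on $\{x_1,\dots,x_{n-t}\}$. Summing the $\Omega(n)$ cost over the first $n/2$ (say) deletions then yields the claimed $\Omega(n)$ amortized bound, completing the proof.
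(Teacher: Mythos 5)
Your proposal is correct and follows essentially the same route as the paper: the identical instance ($f$ additive with $f(x_i)=3^i$, $1$-uniform matroid, insertions in increasing and deletions in decreasing order), with each deletion hitting the current solution and forcing a from-scratch run of \swapping on the surviving prefix, costing $\Omega(i)$ oracle calls and summing to $\Omega(n^2)$. Your extra inductive verification that the recomputed solution is always the largest surviving element (so the chain of forced recomputations persists) is a detail the paper leaves implicit ("by construction"), but it is the same argument.
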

    \begin{proof}
        Consider the stream and the function presented above, with a cardinality constraint $k=1.$
        We divide the analysis of the running time into two parts. During the first $n$ operations, the algorithm performs at least one value call and one independence call. For each one of the deletion operations, the deleted element is --- by construction --- in the solution maintained by \fullyswapping. Let's call $x_i$ this deleted element, the routine \swapping is called on the sequence $x_1,\dots,x_{i-1}$, where it performs $\Omega(i)$ oracle calls. Summing up, the total number of oracle calls (both value and independence calls) is $\Omega(n^2)$, yielding an amortized running time of $\Omega(n)$. 
    \end{proof}
    
    \begin{claim}
        \fullygreedy has worst case amortized running time that is $\Omega(n)$
    \end{claim}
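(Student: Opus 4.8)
The plan is to exhibit the same kind of quadratic blow-up for \fullygreedy as was just shown for \fullyswapping, using the same instance: the additive function $f(x_i) = 3^i$ on $V = \{x_1,\dots,x_n\}$ with a cardinality constraint $k = 1$, where elements are inserted in increasing order $x_1, x_2, \dots, x_n$ and then deleted in decreasing order $x_n, x_{n-1}, \dots, x_1$. The crucial point is that, unlike \fullyswapping, \fullygreedy recomputes the solution from scratch on \emph{every} insertion (not just on deletions that hit the solution), so one can already extract $\Omega(n)$ amortized cost from the insertion phase alone.

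First I would analyze the insertion phase. On the $i$-th insertion (for $i = 1, \dots, n$), \fullygreedy runs \lazygreedy from scratch on the current set $V_i = \{x_1, \dots, x_i\}$. Even with the lazy-greedy optimization, selecting the single best element (here $x_i$, since $f(x_i) = 3^i$ dominates) requires evaluating the marginal gain of candidate elements, and a standard argument shows \lazygreedy must perform $\Omega(i)$ value-oracle calls on an instance of size $i$ in the worst case — in particular, the very first greedy step needs $\Omega(i)$ calls to populate or validate the priority queue of marginal values over all $i$ available elements. Summing over $i = 1, \dots, n$ gives $\Omega(n^2)$ oracle calls during insertions, hence $\Omega(n)$ amortized running time. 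This already suffices; one could additionally remark that the deletion phase only makes things worse, since each deletion removes the element currently in the solution and triggers another from-scratch \lazygreedy call, contributing a further $\Omega(n^2)$ term exactly as in the proof of the previous claim.

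The main obstacle — or rather, the one point that needs a careful word — is justifying the $\Omega(i)$ lower bound on a single \lazygreedy invocation, since \lazygreedy is designed precisely to avoid re-evaluating all marginals. The resolution is that on the \emph{first} greedy step there is no stale priority queue to exploit: \lazygreedy must compute $f(x_j)$ for every available element $x_j$ (or at least inspect all of them to build the initial max-heap), which is $\Omega(i)$ value calls when $|V_i| = i$. Since \fullygreedy starts each recomputation afresh, this first-step cost is incurred on every one of the $n$ insertions. An alternative, fully robust way to phrase it, avoiding any assumption about \lazygreedy's internal heap behavior, is to simply lower-bound the cost of \fullygreedy by one oracle call per operation during the first $n$ operations, and then charge the bulk of the cost to the deletion phase verbatim as in the \fullyswapping proof: each of the $n$ deletions removes the in-solution element and forces \lazygreedy on a set of size $\Omega(i)$, costing $\Omega(i)$ calls, for a total of $\Omega(n^2)$ and amortized $\Omega(n)$. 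Either framing completes the proof; I would present the deletion-phase argument as the primary one since it mirrors the preceding claim and requires no structural assumptions about the lazy-greedy implementation beyond that it must touch every element at least once to certify optimality of its greedy choice on a fresh instance.
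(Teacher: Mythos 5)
Your proposal is correct and, in the framing you yourself designate as primary (charging $\Omega(i)$ oracle calls to each deletion, which by construction always removes the element in the current solution, and only one call per operation to the insertion phase), it coincides with the paper's own proof on the same instance. Your additional observation that the insertion phase alone already yields $\Omega(n^2)$ calls, since \fullygreedy recomputes from scratch on every insertion and \lazygreedy must evaluate every element at least once on a fresh instance, is a valid bonus but not needed; the paper does not use it.
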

    \begin{proof}
        Consider the stream and the function presented above, with a cardinality constraint $k=1$.Also here, we divide the analysis of the running time into two parts. During the first $n$ operations, the algorithm performs at least one value call and one independence call (actually way more than that). For each one of the deletion operations, the deleted element is --- by construction --- in the solution maintained by \fullygreedy. Let's call $x_i$ this deleted element, the routine \lazygreedy is called on the sequence $x_1,\dots,x_i$, where it performs $\Omega(i)$ oracle calls. All in all, the total number of oracle calls (both value and independence calls) is $\Omega(n^2)$, yielding an amortized running time of $\Omega(n)$. 
    \end{proof}

\end{document}